\definecolor{darkgreen}{rgb}{0,0.5,0}
\crefname{theorem}{Theorem}{Theorems}
\crefname{lemma}{Lemma}{Lemmas}
\crefname{corollary}{Corollary}{Corollaries}
\crefname{observation}{Observation}{Observations}
\newcommand{\nin}{\not\in}
\renewcommand{\epsilon}{\varepsilon}
\newcommand{\eps}{\varepsilon}
\newcommand{\calR}{\mathcal{R}}
\newcommand{\E}{\mathbb{E}}
\newcommand{\OO}{\mathcal{O}}
\newcommand{\NN}{\mathbb{N}}
\newcommand{\nwn}[1]{\texttt{newneighbor(#1)}}
\newcommand{\bnwn}[1]{\texttt{newABneighbor(#1)}}
\newcommand{\ignore}[1]{}
\newtheorem{theorem}{Theorem}[section]
\newtheorem{lemma}[theorem]{Lemma}
\newtheorem{proposition}[theorem]{Proposition}
\newtheorem{corollary}[theorem]{Corollary}
\newtheorem{definition}[theorem]{Definition}
\newtheorem{claim}{Claim}
\title{An O(n) time algorithm for finding Hamilton cycles with high probability}
\newcommand*\samethanks[1][\value{footnote}]{\footnotemark[#1]}
\author{ 
Rajko Nenadov
	\samethanks[1]
\and
Angelika Steger
	\samethanks[1]
\and
Pascal Su
	\thanks{
		Department of Computer Science, ETH Zurich, Switzerland. \newline
		Email: {\tt raikon@gmail.com, asteger@inf.ethz.ch, sup@inf.ethz.ch}.
	}\ \thanks{The research leading to these results has received funding from grant
no.\ 200021 169242 of the Swiss National Science Foundation}
}
\date{}
\begin{document}

\maketitle

\begin{abstract}	
We design a randomized algorithm that finds a Hamilton cycle in $\OO(n)$ time with high probability in a random graph $G_{n,p}$ with edge probability $p\ge C \log n / n$. This closes a gap left open in a seminal paper by Angluin and Valiant from 1979. 

\end{abstract}

\section{Introduction}
A Hamilton cycle is a cycle in a graph that visits every vertex exactly once. 
Determining whether a graph has a Hamilton cycle is a notoriously difficult problem that has been tackled in various ways. In general, it is known to be $\mathcal{NP}$-hard, putting it in a bag of complexity theory together with colorability or SAT, problems for which one has tried to find polynomial time algorithms for a long time without any success so far. 

While the Hamilton cycle problem is a difficult problem in general, it turns out that for most graphs it is actually not. To illustrate this, we take a closer look at the Erd\H{o}s-R\'enyi random graph $G_{n,p}$ which is an $n$-vertex graph with each edge being present independently with probability $p$. The existence question of the Hamilton cycle problem is very well understood, cf.\ the comprehensive survey by Frieze~\cite{frieze2019hamilton}.  
Let $\mathcal{H}$ be the set of Hamiltonian graphs, then for $G_{n,p}$ it holds that (Koml{\'o}s and Szemer{\'e}di~\cite{komlos1983limit} and Korshunov~\cite{korshunov1976solution})

\[ Pr[  G_{n,p(n)} \in \mathcal{H} ]  =  \left\{
                \begin{array}{ll}
                  0, &\quad p(n) =  \frac{\log(n) + \log\log(n) - \omega(1)}{n}  \\
                  e^{-e^{-c}},& \quad p(n) = \frac{\log(n) + \log\log(n) +c + o(1)}{n}  \\
                  1, & \quad p(n) = \frac{\log(n) + \log\log(n) + \omega(1)}{n}  \\
                \end{array}
              \right.
              \]

Which is limitwise the same as the threshold for when $G_{n,p}$ has minimum degree $2$. So really vertices of degree one are the bottleneck for random graphs. In fact, it is known that if we add the edges randomly one by one, the moment we reach minimum degree $2$ is the same as the moment the graph becomes Hamiltonian with high probability \cite{ajtai1985first}.
And this threshold is also robust (e.g. \cite{nenadov2019resilience, montgomery2019hamiltonicity}). For other random graph models like the random graph with $m$ edges $G_{n,m}$, the random regular graph $G_{n,r}$ or the $k$-out which takes $k$ random edges from every vertex the corresponding thresholds for Hamiltonicity are also known \cite{fenner1984hamiltonian, bollobas1990hamilton, bohman2009hamilton, robinson1994almost, sudakov2008local}. Similar to the classical random graph case also in these cases the thresholds coincides with a local obstruction such as minimum degree two or any two vertices have a neighborhood of size at least 3. And this is not a coincidence. Randomness gives us such nice expansion properties that only the small structures can be an obstruction to the Hamilton cycle. This phenomenon has been observed also for other properties such as connectivity, containing a perfect matching or colorability. 

 The proofs of Komlos and Szemeredi and Korshunov are just existential, i.e. they determine the threshold for the existence of Hamilton cycle, but do not provide an efficient algorithm for finding it. In a seminal paper, Angluin and Valiant  \cite{angluin1979fast} show that with the input given as a random adjacency list one can find Hamilton cycles in $G_{n,p}$ for $p\ge C \log n /n$ in $\OO( n \log^2 n) $ time with high probability. There are two ways in which this result is possibly non-optimal: the lower bound on $p$ and the runtime. The first point was considered by Shamir and then Bollobas, Fenner and Frieze, who brought the bound down to the existence threshold of $G_{n,p}$. In more recent works the runtime has also been optimized for graphs given in adjacency matrix form, assuming a pair of vertices can be queried in constant time. We summarize these results in the table below.
There are various related results that are hard to compare, as their setting is slightly different \cite{ferber2016finding, allen2015tight, frieze1996generating, frieze1988finding}. Some of the results are assuming the graph is given as an adjacency matrix with black box queries and the runtime $\OO(n/p)$ is optimal in that model.  

\begin{center}
\begin{tabular}{ l l l l l }
\hline
\bf Authors &\bf Year &\bf Time &\bf  p(n) & Graph Model\\ \hline  %
 Angluin, Valiant \cite{angluin1979fast} & `79 &$ \OO(n \log^2(n))$ & $p\ge \frac{C \log(n) }{n} $ &  adj. list \\ \hline
 Shamir \cite{shamir1983many} & `83 & $\OO(n^2)$ & $p\ge \frac{\log(n) + (3+\epsilon)\log\log(n) }{n} $ &  adj. list \\ \hline
 Bollobas, Fenner, Frieze \cite{bollobas1987algorithm} & `87 &$ n^{4 + o(1)} $ & $p\ge Existence \ threshold $ & adj. list  \\ \hline
 Gurevich, Shelah \cite{gurevich1987expected}& `87 &$\OO(n/p) $ & $p$ const.&  adj. matrix\\ \hline
 Thomason \cite{thomason1989simple} & `89 &$ \OO(n/p) $ & $p\ge C n^{-1/3} $ &  adj. matrix \\ \hline
 Alon, Krivelevich \cite{alon2020finding} & `20 &$\OO(n/p)$ & $p\ge 70n^{-1/2} $ & adj. matrix \\ \hline
\end{tabular}
\end{center}%% using rotations (or thomason triangles)

In this paper we consider the second question that was left open in the  Angluin-Valiant paper: can the runtime be improved. Note that a graph with $p\ge C \log n /n$ has $\Theta(n\log n)$ edges. Thus, improving the runtime below
this bound requires a \emph{sublinear} algorithm, i.e.\ sublinear in the input size. These are algorithms that produce an output without reading the input completely (see e.g.\ \cite{rubinfeld2011sublinear} for an overview of the topic). Such algorithms are less restrictive than those designed for online or a (semi-)streaming model as they allow some control over which part of an input is used. However for graphs with $n$ vertices and $m \gg n$ edges the algorithm is only allowed to read $o(m)$ edges, i.e., a negligible fraction of the input --- but nevertheless has to compute the desired output correctly.

\subsection{Our contribution}
In this paper we show that given a random graph with edge probability  $p\ge C \log n /n$, for an appropriately chosen constant $C$, we can find a Hamilton cycle in $\OO(n)$ time with high probability. This time is clearly optimal, as the algorithm has to return $\Omega(n)$ edges. We assume that the graph is given to us with randomly ordered adjacency lists, such that we can query the next neighbor in those lists for any vertex in constant time.

\begin{theorem} \label{thm:main}
There exists a randomized algorithm $\calR$ which finds a Hamilton cycle in a random graph $G_{n,p}$ in $\OO(n)$ time with high probability, provided $p\ge C \log n / n$ for a sufficiently large constant $C$.
\end{theorem}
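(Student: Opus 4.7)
The plan is to adapt the classical Pósa rotation-extension strategy so that the total work becomes linear. There are two design ingredients. First, the random adjacency lists are exposed lazily through the \nwn{v} and $\peek$ primitives, so that the algorithm pays only for edges it actually inspects. Second, I split the edge set of $G_{n,p}$ via the identity $1 - p = (1-p_1)(1 - p_2)$ into a ``main'' sample $G_{n,p_1}$ and an independent ``reserve'' sample $G_{n,p_2}$, with $p_1,p_2 = \Theta(\log n / n)$, reserving $G_{n,p_2}$ entirely for the rotation phase. Both parts still sit above the Pósa-expansion threshold, so both phases succeed whp.

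In the first (greedy) phase I grow a single path $P$ using only the main lists. Starting from an arbitrary vertex, at the current endpoint $v$ I repeatedly call \nwn{v} until a neighbour $u \notin V(P)$ is returned, and then extend the path to $u$. Since the adjacency lists are in uniformly random order, the expected number of peeks needed when $|V(P)| = k$ is at most $n/(n-k)$, and
\[ \sum_{k=0}^{(1-\delta)n} \frac{n}{n-k} \;=\; \OO\!\left(n \log(1/\delta)\right) \;=\; \OO(n) \]
for any fixed constant $\delta > 0$. I therefore stop the greedy phase once $P$ covers a $(1-\delta)$-fraction of $V$; continuing all the way to $k = n$ would incur the harmonic blow-up $\Theta(n \log n)$, which is precisely why the original Angluin-Valiant runtime is not linear.

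In the second phase I absorb the remaining $\delta n$ vertices and close $P$ into a Hamilton cycle using the reserve edges. Here I iterate Pósa rotations from the current endpoint: a reserve edge that leaves $V(P)$ extends the path, while one that enters an interior vertex $v_i$ produces a new endpoint $v_{i \pm 1}$ via an $\OO(1)$-time pointer update on a doubly linked representation of $P$ carrying a lazy orientation bit, so that no path reversal is ever performed explicitly. Standard expansion arguments imply that whp, after $\OO(\log n)$ rotations the set of reachable endpoints has linear size, so each absorbed vertex should cost an amortised constant number of rotations; the same loop is then run a final time to turn the Hamilton path into a Hamilton cycle.

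The main obstacle I anticipate is the accounting in Phase~2. A textbook Pósa analysis would charge $\Theta(\log n)$ rotations per absorption and up to $\Theta(\log n)$ fresh peeks per rotation, which is already a factor $\log^2 n$ too slow. The crux of the proof should therefore be a sharper ``booster'' statement tailored to the reserve graph $G_{n,p_2}$: for any current path and any residual target set $S$ of linear size, a constant number in expectation of new neighbour queries issued along a suitably chosen sequence of rotations suffices to hit $S$. Because every such query exposes a fresh reserve edge and each edge is peeked at most once, the total number of peeks in Phase~2 can then be amortised against the $\delta n$ absorptions and the $\OO(n)$ edges of $P$, giving the desired $\OO(n)$ runtime.
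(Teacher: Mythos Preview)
Your proposal has a genuine gap in Phase~2, and it is exactly the obstacle you yourself flag without resolving. Stopping the greedy walk at $(1-\delta)n$ vertices does not eliminate the harmonic sum, it merely postpones it: to absorb the remaining $\delta n$ vertices one at a time, when $k$ vertices are still outside $P$ the relevant target set has size $k$, not linear size, so any endpoint query hits it with probability $k/n$ and you need $\Theta(n/k)$ fresh reserve edges in expectation for that single absorption. Summing over $k=1,\dots,\delta n$ reproduces $\Theta(n\log n)$. Your proposed ``sharper booster'' is stated only for target sets $S$ of linear size, so it says nothing about the last $o(n)$ absorptions, which is precisely where all the cost lives. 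Pósa rotations do not help here: growing a linear-size endpoint set already costs $\Theta(n)$ rotations, and doing that even once per leftover vertex is $\Theta(n^2)$.

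The paper sidesteps this entirely by changing what has to be merged. It first spends $\OO(n)$ time finding two independent perfect matchings; their union is a $2$-factor with only $\OO(\log n)$ cycles whp, so Phase~2 has to glue $\OO(\log n)$ objects rather than $\delta n$ singletons. The harmonic sum is killed not in the cycle-joining phase but inside the matching algorithm, via an expanding-neighbourhood trick: when $i$ vertices remain unmatched, each of them exposes $d(i)=\min(\sqrt{n/i},\log n)$ neighbours, so the alternating random walk succeeds with probability $\Theta(i\,d(i)/n)$ rather than $\Theta(i/n)$, and $\sum_i n/(i\,d(i))=\OO(n)$. A secondary issue: your $\OO(1)$ ``lazy orientation bit'' does not survive more than one rotation, since consecutive rotations leave different segments with different orientations and you cannot tell in $\OO(1)$ which neighbour of $v_i$ lies on the endpoint side; the paper uses AVL trees and pays $\OO(\log n)$ per rotation, compensated by needing only $\OO(n/\log n)$ rotations in total.
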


Note that `with high probability' is always meant to mean with probability $1-o(1)$ tending to one as $n$ tends to infinity and takes into account all sources of randomness: i.e., the randomness of the algorithm, the random graph and the randomness of the datastructure used to store the graph (random ordering of the adjacency lists). 

Our paper is organized as follows.
Section \ref{sec:algo} contains the algorithm and the proof of Theorem \ref{thm:main}. It is based on three technical lemmas that we prove in Section~\ref{sec:datastructures}.

\section{Algorithm} 
\label{sec:algo}

The most commonly used technique for efficient cycle extensions is Posa rotations.  
This is also the case for the original algorithm of Angluin and Valiant~\cite{angluin1979fast}, which we outline in Section~\ref{sec:oldalgo} below, cf.\ also Figure~\ref{fig:oldalgo_posa}.To reduce the runtime to $\OO(n)$ we reduce the total {\em number} of Posa rotations that are required and simultaneously also restrict ourselves to certain types of Posa rotations so that we can realize each of them in $\OO(\log n)$ time.

\subsection{Finding A Hamilton Cycle via Posa Rotations}
\label{sec:oldalgo}

We sketch here the algorithm of Angluin and Valiant. 
The main idea of their algorithm is to perform a greedy random walk until all vertices are incorporated in the path/cycle. This means we start from an arbitrary vertex and query a neighbor of that vertex. If the neighbor is already contained in the path we have built so far we consider this a failure and we query a new neighbor. Otherwise we add the neighbor to the path and continue from the new endpoint vertex (see Figure~\ref{fig:oldalgo}). 

Once the path is long enough (at least $n/2$) we add possible Posa rotations. Assume we start with a path $P = (v_1, \dots, v_s)$, then if we find two edges such that for some index $i\in [s]$ the edges are of the form  $\{v_{i+1}, v_s\}$ and $\{v_i, v_j\}$ for some $j > i+1$, we can rearrange the path to form a new path $P' = (v_1, \dots, v_i, v_j, v_{j+1}, \dots, v_s, v_{i+1}, \dots, v_{j-1})$ and now the new path has the same vertex set but a different endpoint vertex. This we call a Posa rotation. Additionally we will always want \emph{long} Posa rotations meaning $s-i$ must be at least $n/2$ to ensure that we can find the second edge needed quickly with high probability.

So during our Algorithm if the neighbor ($v_{i+1}$) of the endpoint of the path ($v_s$) has distance at least $n/2$ from the endpoint along the path we use that edge to build a cycle and continue from the vertex preceding the neighbor ($v_i$) on the path (see Figure~\ref{fig:oldalgo_posa}). This leaves a cycle of size at least $n/2$ and if we ever find one of the vertices on the cycle to be the neighbor of the current endpoint we reincorporate the large cycle by appending it to the path (again giving a new endpoint).

\begin{figure} 
\centering \includegraphics[width = \textwidth]{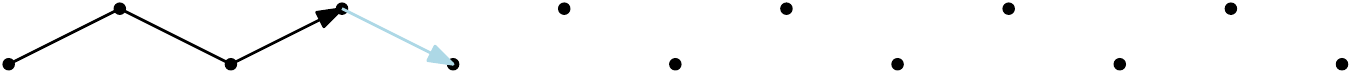} 
\caption{Algorithm uses a random walk like strategy, blue edge is the \nwn{}.}
\label{fig:oldalgo}
\vspace{1em}
\end{figure} 

\begin{figure}
\centering \includegraphics[width = \textwidth]{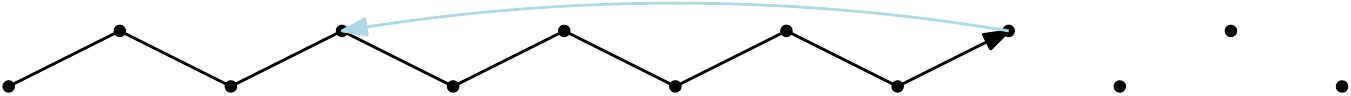} 
\centering \includegraphics[width = \textwidth]{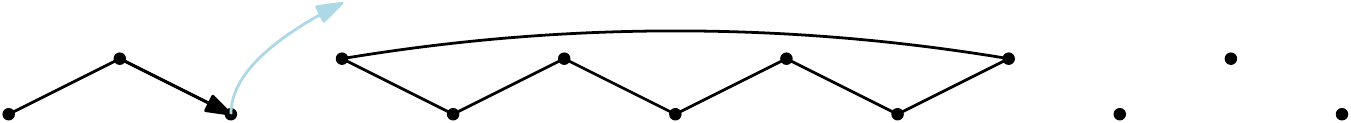} 
\caption{Posa rotation, detaching a large cycle.}
 \label{fig:oldalgo_posa}
\vspace{1em}
\centering \includegraphics[width = \textwidth]{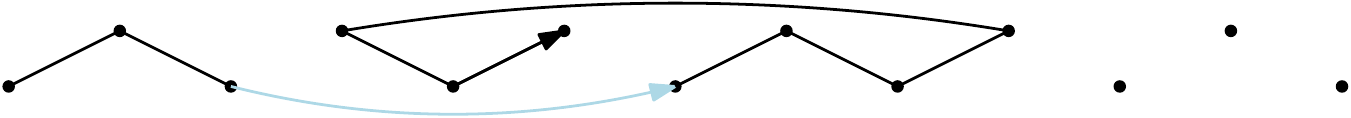}
\caption{Reincorporating the large cycle.}
\end{figure}

Many details need to be considered on how random variables interact, etc., but leaving those aside one can easily convince oneself that on average the current vertex changes after a constant number of queries to a new random vertex, and that the number of queries until the path length increases by one is geometrically distributed and has an expectation of $n/(n-i)$ where $i$ is the current length of the path. The total number of Posa rotations is thus bounded by
\[ \OO\left(\sum_{i=1}^n \frac{n}{i}\right) =\OO(n \log n) .\]
As each Posa rotation takes time $\log n$ to realize this gives a total running time of $\OO(n \log^2 n)$.

\subsection{Our Algorithm}

We give a short overview of the new algorithm we propose. The algorithm comes in two phases. In phase 1 we find two random perfect matchings. 
The union of these two random perfect matchings forms a two regular graph, i.e., a set of disjoint cycles or double edges covering all vertices. It is not difficult to show that the number of cycles is with high probability bounded by $2 \log n$. In phase 2 of the algorithm 
we stitch these $2 \log n$ cycles together.

For the analysis of the algorithm it is very helpful to assume that a query for a new neighbor of some vertex $v$ returns a vertex $w$ that is {\em uniformly} distributed over all vertices in $V - v$ and {\em independent} from all previous queries. Of course such an assumption a priori does not hold if we simply return the next vertex from the adjacency list of $v$. We realize this by directing the edges and resampling. More formally, we will show the following lemma in Section~\ref{sec:datastructures}; in the remainder of Section~\ref{sec:algo} we will use the corresponding function \nwn{} as a black box.

\begin{restatable*}[newneighbor]{lemma}{newneighbor}
\label{lem:newneighbor} It is possible to interact with the graph $G_{n, p}$, $p \ge \frac{C \log n}{n}$, with an algorithmic procedure \nwn{$v$} which has the following properties with high probability: \\
\noindent
$(i)$ Calling \nwn{$v$} returns a neighbor of $v$ distributed uniformly among $V- v$ and independent of all calls so far -- as long as we make at most $\OO(n)$ calls to \nwn{} altogether and every vertex is queried at most $100 \log n$ times.\ \
\noindent $(ii)$ The total run time of all $\OO(n)$ calls is $\OO(n)$.
\end{restatable*}
Note that this algorithm uses both internal randomness as well as the randomness of $G_{n,p}$. If \nwn{$v$} ever returns 'there are no more neighbors' we immediately terminate the entire algorithm and return failure. To avoid this, we will prove that we query \nwn{$v$} from any vertex at most $100 \log n$ times w.h.p.\ and choose $C$ large enough so that with high probability the minimum degree of the random graph is large enough.

%%%%%%%%%%%% phase1 %%%%%%%%%%%%%%%
\subsubsection{Phase 1: Perfect Matching}
\label{sec:phase1}

In the first phase of the algorithm we show that we can find a perfect matching in $\OO(n)$ time. We call the algorithmic procedure described in this section \texttt{FastPerfectMatching}, see Algorithm~\ref{alg:perfectmatching}.
%
%\noindent \textbf{Description of algorithm} \texttt{FastPerfectMatching}
%
In fact, for an easier understanding of the required ideas, we work in this section with a random {\em bipartite}
random graph. This can easily be done by partitioning the vertex set $V$ into two equal sets $A$ and $B$ arbitrarily (if $n$ is odd we set one vertex aside and include it in phase 2) and only considering the edges between $A$ and $B$.
Formally, the function
\bnwn{v} calls  \nwn{v} until we receive a neighbor which is in $B$ (resp. $A$).
\begin{claim}
	If we call \bnwn{} for a sequence of $\OO(n)$ vertices, in which every vertex $v\in A\cup B$ occurs at most $\log n$ times, then with high probability this results in at most $\OO(n)$ calls to \nwn{} with at most $6 \log n$ calls per vertex.
\end{claim}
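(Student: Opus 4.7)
The plan is to reduce each \bnwn{v} call to a geometric waiting-time experiment. By Lemma~\ref{lem:newneighbor}, each internal call to \nwn{v} returns a uniform element of $V - v$ independently of all other calls, so the probability that the returned vertex lies in the opposite part is $|B|/(n-1) \ge 1/2$ if $v \in A$ (symmetrically for $v \in B$). Hence the number of \nwn{v} calls inside a single invocation of \bnwn{v} is stochastically dominated by a Geometric$(1/2)$ random variable with mean $2$, and the dominating variables across different \bnwn{} invocations are mutually independent.

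For the total-count bound, the overall number of \nwn{} calls is a sum of $N = \OO(n)$ i.i.d.\ Geometric$(\ge 1/2)$ random variables, equivalently the waiting time until $N$ successes in a fair-coin sequence. A standard Chernoff bound applied to Bin$(4N, 1/2)$ shows this waiting time exceeds $4N$ with probability $e^{-\Omega(N)} = o(1)$, giving the $\OO(n)$ total bound.

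For the per-vertex bound, fix $v \in A \cup B$ and let $X_v$ denote the total number of \nwn{v} calls generated by the (at most $\log n$) invocations of \bnwn{v}. Then $X_v$ is a sum of at most $\log n$ independent Geometric$(\ge 1/2)$ variables, so the event $\{X_v > 6 \log n\}$ is contained in $\{\mathrm{Bin}(6\log n,\, 1/2) < \log n\}$. A Chernoff estimate bounds the latter by $n^{-c}$ for some $c > 1$, and a union bound over the $n$ vertices yields the uniform $6 \log n$ bound with probability $1 - o(1)$.

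The main subtlety is a mild circularity: Lemma~\ref{lem:newneighbor} guarantees the uniformity property only while \nwn{} is called at most $\OO(n)$ times overall and at most $100 \log n$ times per vertex. Since our target bounds ($\OO(n)$ total and $6 \log n$ per vertex) lie comfortably below those thresholds, I would resolve this via a standard stopping-time coupling: run the process against an idealized oracle that always returns independent uniform neighbors, derive the two concentration inequalities above in the idealized model, and then observe that with high probability the idealized process never leaves the validity region of \nwn{}, so the two processes coincide throughout.
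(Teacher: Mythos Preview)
Your proposal is correct and follows the same approach as the paper: each \nwn{} call lands in the correct partition with probability at least $1/2$ independently, so the waiting times are dominated by Geometric$(1/2)$ variables, and Chernoff plus a union bound over vertices finishes it. The paper's own argument is a one-line sketch of exactly this; your version is more detailed and in particular makes explicit the stopping-time coupling needed to handle the circular dependence on Lemma~\ref{lem:newneighbor}, which the paper leaves implicit.
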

The claim holds because any call of \nwn{} has probability at least $1/2$ to be in the correct partition and, by our assumptions on \nwn{}, the calls are independent. We can thus apply concentration bounds for binomial distributions and union bound for every vertex. Clearly, \bnwn{} still has a uniform and independent distribution over all vertices of the opposite partition.

Let $G$ be the balanced bipartite graph with partitions $A$ and $B$. During the algorithm we will maintain a matching $M$ which covers some of the vertices and is empty at first. At any point in time, we denote by $A_{M}$  the vertices in $A$ that are covered by the matching and with $A_{0}$ the unmatched vertices. Equivalently for $B_{M}$ and $B_{0}$. 

Additionally we need a set of edges that expand well from the vertices of $A$. And we need to be able to keep track of them efficiently and on the fly. So for any vertex $v$ we define the $d$-neighborhood of $v$, $N_d(v) \subseteq V(G)$, to be the set of the first $\lceil d\rceil$ calls to the function \bnwn{$v$}. In particular this implies that for any $d' < d$ the $d'$-neighborhood is contained in the $d$-neighborhood of $v$. Similarly, the $d$-neighborhood of a set of vertices $S$, denoted by $N_d(S)$, is defined as the union of the $d$-neighborhoods of all vertices in $S$. We expose and keep track of the $d$-neighborhood of the unmatched vertices $A_0$, $N_{d(|A_0|)}(A_0)$, for the function $d(t) = \min(\sqrt{n/t}, \log n)$. This gives us a neighborhood large enough for the random walks to be effective, but small enough so that we do not need too much time to update/expose.

To increase the matching we call a subroutine \texttt{IncreaseMatching}.
\texttt{IncreaseMatching} takes as argument the current matching $M$ and an unmatched vertex $v \in B_0$. It proceeds as follows.
If $v$ is in $N_d(A_0)$ we add the corresponding neighbor in $A_0$ and $v$ to the matching.
If not we take $w = \bnwn{v}$. If $w$ is in $A_0$ we add the edge $\{w, v\}$ to $M$. If neither of the two is the case, then $w \in A_M$ and there exists a unique $u$ such that $\{w, u\}$ is currently in $M$. We swap $\{w, u\}$ for $\{w, v\}$, thereby making $u$ a new unmatched vertex, and repeat \texttt{IncreaseMatching} with $u$, cf. Figure~\ref{fig:increasematching}.

Clearly, during the run of the algorithm we also have to dynamically update the $d$-neighborhood of $A_0$. In particular this means removing $N_d(w)$ of a newly matched vertex $w$ and, if $d(|A_0|)$ increases, adding vertices from additional calls to \bnwn{} for every vertex in $A_0$.

\begin{figure}[t]
\centering \includegraphics[width = \textwidth]{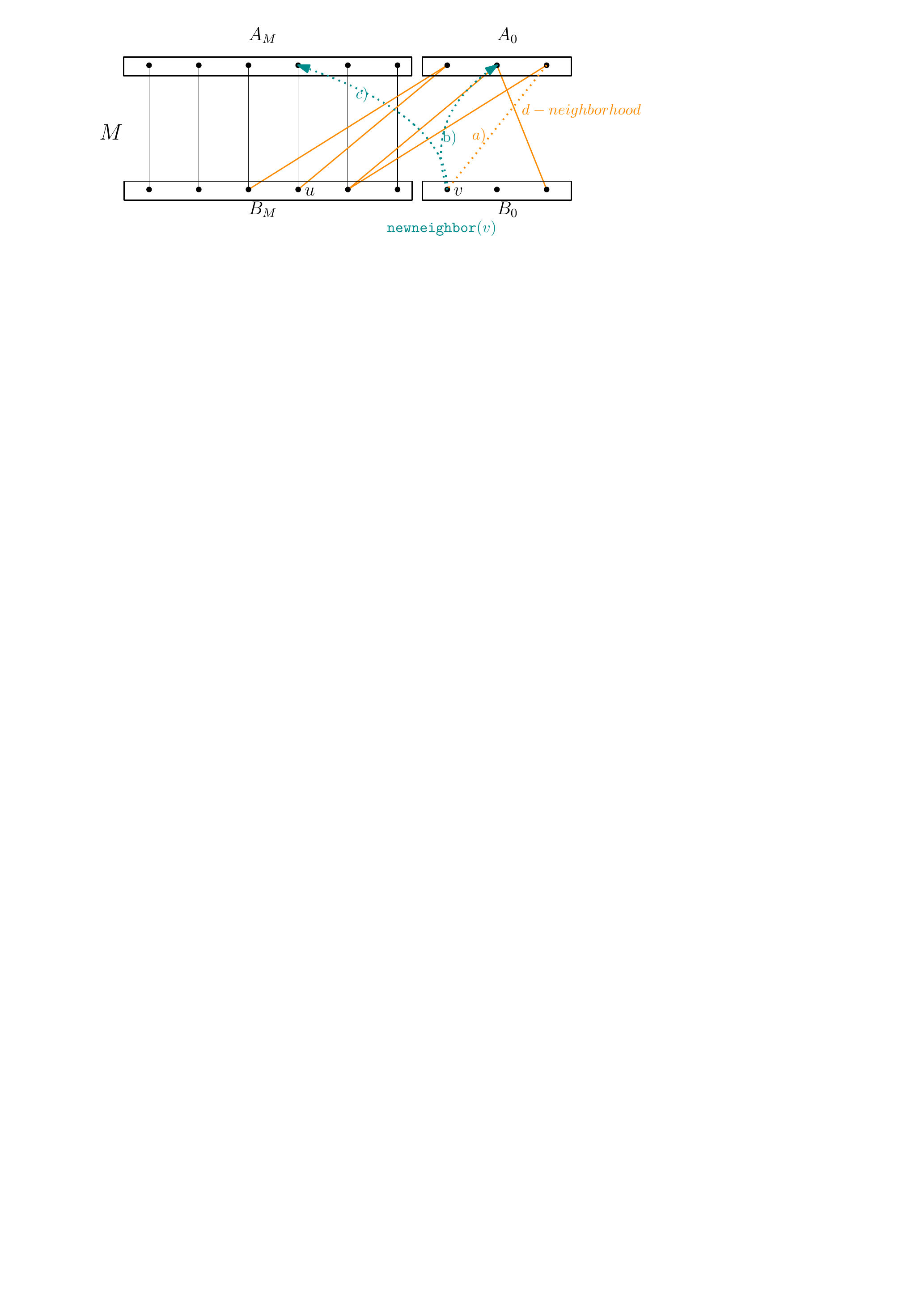} 
\caption{For \texttt{IncreaseMatching} three things can happen. Either a) the vertex is already in the neighborhood of $A_0$, in which case we match immediatly, b) the vertex \bnwn{$v$} is in $A_0$, which also gets matched, or c) \bnwn{$v$} is in $A_M$. Then we swap the matching and continue from the partner of the \bnwn{$v$}.}
\label{fig:increasematching}
\vspace{1em}
\end{figure}

\begin{algorithm}[t]
\caption{{$ FastPerfectMatching (G)$}} \label{alg:perfectmatching}
\begin{algorithmic}[1]

\State $B_0 \leftarrow B $; $B_M \leftarrow \{\} $; $A_0 \leftarrow A $; $A_M \leftarrow  \{\} $; 
\State d $\leftarrow$ 0; $M \leftarrow \{\}$
\While{$B_0 \ne \{\} $}

\State $v \leftarrow $ arbitrary vertex from $B_0$ \Comment{and remove from $B_0$}
\State IncreasingMatching($G, M, v$); \Comment{see Algorithm~\ref{alg:random_walk}}
\While{ $d < \min\left(\sqrt{\frac{n}{|A_0|}} , \log(n)\right) $ }
\State $d \leftarrow d+1$
\State Add \bnwn{$v$} to the $d$-neighborhood for every vertex in $v \in A_0$   
  \EndWhile
\EndWhile
\State \Return Matching M
\end{algorithmic}
\end{algorithm}

\begin{algorithm}[t]
\caption{{$ IncreaseMatching (G, M, v)$}}
\label{alg:random_walk}
\begin{algorithmic}[1]
\If{$v  \in N_d(A_0)$}
\State $w \leftarrow $ [neighbor of $v] \in A_0$
\State Add $\{v, w\}$ to $M$
\State Remove $w$ from $A_0$ and update $N_d(A_0)$
\State \Return
\EndIf

\State $w \leftarrow $ \bnwn{$v$}
\If{$w \in A_0$} 
\State Add $\{v, w\}$ to $M$
\State Remove $w$ from $A_0$ and update $N_d(A_0)$
\State \Return
\EndIf

\State $u \leftarrow $ unique vertex with $\{u, w\} \in M$
\State Remove $\{u, w\}$ from $M$ and replace with $\{v, w\}$
\State IncreaseMatching(G, M, u)

\State \Return

\end{algorithmic}
\end{algorithm}

To bound the runtime of Algorithm 2, \texttt{FastPerfectMatching}, we observe first that 
we increase the matching exactly $n$ times, which is inline with our desired bound of $\OO(n)$. We can thus concentrate on bounding the {\em recursive} calls to \texttt{IncreaseMatching} in line 13 of 
 \texttt{IncreaseMatching}.
 
\begin{lemma} \label{lem:randomwalk}
Let $\mathcal{L}_i$ denote the number of calls \texttt{IncreaseMatching} in line 13, while $|A_0| = i$ for any $i\in [n]$. Then the $\mathcal{L}_i$ are dominated by independent geometric distributions with success probability $p_i = \frac{i \cdot d(i)}{100 n}$.
\end{lemma}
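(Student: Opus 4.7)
My plan is to view the sequence of calls to \texttt{IncreaseMatching} issued while $|A_0| = i$ as a sequence of Bernoulli trials, where a trial ``succeeds'' if the call returns via one of the two match branches and ``fails'' if it reaches line~13 and recurses. Under this view $\mathcal{L}_i$ equals the number of failed trials in this stretch, so stochastic domination by $\text{Geom}(p_i)$ reduces to proving that, conditional on any history of the algorithm up to the current trial, the conditional probability of success is at least $p_i$.

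For a single trial with current vertex $v \in B$, two events lead to success: (a) $v \in N_d(A_0)$, checked at the top of \texttt{IncreaseMatching}, and (b) the fresh query $\bnwn{v}$ lies in $A_0$, handled by the second branch. By Lemma~\ref{lem:newneighbor} together with the claim on $\bnwn{}$ preceding it, $\bnwn{v}$ is distributed uniformly over $A$ and independent of all past randomness, so event (b) has conditional probability exactly $|A_0|/|A| = 2i/n$. For event (a), the key observation is that the $d(i) \cdot i$ \bnwn{} outputs that populate $N_d(A_0)$ come from a different block of \bnwn{}-calls---those issued in the outer loop of \texttt{FastPerfectMatching} for the vertices $a \in A_0$---than the \bnwn{}-calls that determine which $v$ is currently being processed (either the pop from $B_0$ or the chain of matching swaps). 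By the principle of deferred decisions one may therefore expose the $N_d(A_0)$ samples only after $v$ is fixed, at which point they form $d(i) \cdot i$ i.i.d.\ uniform samples on $B$. Hence for fixed $v$,
\[
\Pr[v \in N_d(A_0)] \ge 1 - (1 - 2/n)^{d(i) \cdot i} \ge \frac{d(i)\, i}{2n},
\]
using that $d(i) \cdot i \le n$, which is immediate from $d(i) = \min(\sqrt{n/i}, \log n)$.

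Combining (a) and (b) gives a conditional per-trial success probability of at least $d(i)\, i /(100 n) = p_i$, and a standard coupling of a sequence of Bernoulli trials against i.i.d.\ trials of success probability $p_i$ yields $\mathcal{L}_i \preceq \text{Geom}(p_i)$. Independence of the $\mathcal{L}_i$ across $i$ follows because distinct stretches $|A_0| = i$ and $|A_0| = i'$ consume disjoint blocks of the overall \bnwn{}-output stream, which is i.i.d.\ by Lemma~\ref{lem:newneighbor}.

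The main obstacle is justifying the deferred-decisions step in the presence of reuse: within a single stretch $|A_0| = i$ the set $N_d(A_0)$ is exposed once and then tested against potentially several different $v$'s along a chain of recursions, so the events ``$v_j \in N_d(A_0)$'' are \emph{correlated} across trials $j$. What rescues the argument is that one only needs a uniform lower bound on the \emph{marginal} per-trial conditional success probability, not joint independence of successes; that marginal bound is exactly what the separation of the two \bnwn{}-streams delivers, and it is enough for the desired geometric domination.
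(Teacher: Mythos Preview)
Your deferred-decisions step is where the argument breaks. You claim that the $d(i)\cdot i$ samples populating $N_d(A_0)$ can be exposed after $v$ is fixed because they come from a ``different block'' of \bnwn{}-calls than those determining $v$. But the \emph{identity} of the current set $A_0$---and hence which $A$-side samples enter $N_d(A_0)$---is itself a function of the entire history, including precisely those $A$-side samples. Concretely, a vertex $a\in A$ was removed from $A_0$ in some earlier stretch exactly when one of its exposed neighbours happened to coincide with the then-current walk vertex; so whether $a$ remains in today's $A_0$ is correlated with the values of $\bnwn{a}$. The samples you want to treat as fresh have already been read and acted upon by the algorithm. Your final paragraph acknowledges correlation within a stretch but does not address this deeper feedback between the $A$-side samples and the composition of $A_0$, and the ``separation of streams'' claim does not survive it: the two streams are independent as raw sequences, but the algorithm interleaves them adaptively.

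The paper handles this with a genuinely different device that you omit: the Neighborhood Lemma (Lemma~\ref{lem:neighborhood}), a union bound over \emph{all} subsets $A'\subseteq A$ showing that with high probability $|N_{d(|A'|)}(A')|\ge \frac{1}{100}|A'|\,d(|A'|)$ simultaneously for every $A'$. Conditioning on this global event makes the lower bound on $|N_d(A_0)|$ deterministic, regardless of how the algorithm arrived at the current $A_0$. The per-trial success probability is then driven entirely by the one genuinely fresh piece of randomness---the new $\bnwn{v}$ call, which (via the matching bijection) makes the next walk vertex uniform in $B_M$ and independent of everything prior---hitting a set of guaranteed size $\ge \frac{1}{100}i\,d(i)$. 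That is what delivers the conditional bound $\ge p_i$ given any history, and hence the geometric domination. Your argument would need this lemma (or an equivalent worst-case-over-subsets bound) to close the gap.
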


\begin{proof}
Whenever we are at a vertex $v$ in $B$ we expose an edge to a random neighbor in the set $A$. If that vertex is in $A_0$ we match $v$ and $|A_0|$ decreases by one so we end the count of $\mathcal{L}_{|A_0|}$. Otherwise we swap with a matched vertex and get a new starting point in $B_0$. As \bnwn{} is independent and uniform, and the matching forms a bijection between $A_M$ and $B_M$, the fact that the vertex is not in $A_0$, implies that we get a new {\em random} vertex $u$ in $B_M$ for the next call. If this vertex is in the exposed $d$-neighborhood of $A_0$ we stop and match to a vertex in $A_0$ also ending the count of $\mathcal{L}_{|A_0|}$.

 To assess the probability of stopping, we use the {\em expansion properties} of the $d$-neighborhood of $A_0$ that are inherited from the random graph. This means in particular that the exposed neighborhood of $A_0$, $N_{d(|A_0|)}(A_0)$, has size at least $\frac{1}{100} |A_0| \cdot d(|A_0|) $, cf.\  Lemma~\ref{lem:neighborhood} in Section~\ref{sec:datastructures} for a proof. The probability of hitting a vertex in $A_0$ or the $d$-neighborhood of $A_0$ (while looking at the matched vertex of $w$ in $B_M$) is thus at least $\frac{ |A_0| \cdot d(|A_0|) }{100 n}$. Every new call of \bnwn{} is independent by Lemma \ref{lem:newneighbor}, thus $\mathcal{L}_i$ is dominated by an independent geometric distribution with success probability as claimed.
\end{proof}

We are now ready to proof the desired complexity bound:

\begin{proposition} \label{prop:fastalgo}
	\texttt{FastPerfectMatching} finds a perfect matching in a balanced random bipartite graph in time $\OO(n)$ with high probability. 
\end{proposition}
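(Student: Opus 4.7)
The plan is to bound the runtime of \texttt{FastPerfectMatching} by breaking it into three contributions: (I) the total number of invocations of \texttt{IncreaseMatching}, each of which, with a suitable membership data structure for $N_d(A_0)$ and $A_0$, uses $\OO(1)$ time plus at most one \bnwn{} call; (II) the cost of maintaining $N_d(A_0)$ as $d$ grows and vertices leave $A_0$; and (III) the amortized $\OO(1)$ cost of each \bnwn{} call provided by Lemma~\ref{lem:newneighbor} whenever the total number of calls is $\OO(n)$ and no vertex is queried more than $100\log n$ times. Bounding (I) and (II) therefore simultaneously controls the running time and the total number of \bnwn{} queries.

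For (I), I would separate the $n/2$ top-level calls (one per vertex of $B_0$) from the recursive calls in line~13. By Lemma~\ref{lem:randomwalk} the recursive counts $\mathcal{L}_i$ are dominated by independent geometrics with parameter $p_i = i\,d(i)/(100\,n)$, hence
\[
\E\!\left[\sum_{i=1}^{n/2} \mathcal{L}_i\right] \;\le\; \sum_{i=1}^{n/2} \frac{100\,n}{i\,d(i)} \;=\; \OO\!\left(\sum_{i \le n/\log^2 n}\frac{n}{i\log n} \;+\; \sum_{i > n/\log^2 n}\sqrt{\tfrac{n}{i}}\right) \;=\; \OO(n),
\]
after splitting at the regime change of $d(i)=\min(\sqrt{n/i},\log n)$. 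Since the $\mathcal{L}_i$ are independent (inherited from the independence of \bnwn{} guaranteed by Lemma~\ref{lem:newneighbor}), a Chernoff-type tail bound for sums of independent geometrics upgrades this to $\sum \mathcal{L}_i = \OO(n)$ with probability $1-o(1)$. For (II), each unit increment of $d$ while $|A_0|=i$ produces exactly $i$ new \bnwn{} calls, so the total cost telescopes to $\OO\bigl(\sum_i d(i)\bigr)$, which the same case split shows is $\OO(n)$; deletions from $A_0$ only save work.

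The step I expect to be the main obstacle is verifying the per-vertex query budget required by Lemma~\ref{lem:newneighbor}. For $v\in A$ this is immediate: $v$ is queried only while in $A_0$, and then at most $d(|A_0|)\le \log n$ times before being matched permanently. For $v\in B$, every visit to $v$ is caused either by its initial pick from $B_0$ or by some earlier \bnwn{} call returning the $A$-vertex currently matched to $v$. Since \bnwn{} outputs a uniformly random vertex of $A$ independently of prior calls, the indicator events ``this call triggers a revisit of $v$'' each have probability at most $2/n$, so a Chernoff bound over the $\OO(n)$ queries combined with a union bound over the $n/2$ vertices of $B$ yields $\OO(\log n/\log\log n)\le 100\log n$ visits to each specific $v$ with high probability. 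Feeding this into Lemma~\ref{lem:newneighbor} turns the amortized $\OO(1)$ per \bnwn{} call into a valid bound for every individual call, and combining (I), (II), and (III) gives total runtime $\OO(n)$ w.h.p., completing the proof.
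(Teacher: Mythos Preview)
Your proposal is correct and follows essentially the same three-part decomposition as the paper: bound the recursive \texttt{IncreaseMatching} calls via Lemma~\ref{lem:randomwalk}, bound the neighborhood-exposure cost by $\sum_i d(i)=\OO(n)$, and verify the per-vertex query budget for Lemma~\ref{lem:newneighbor}.

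Two minor remarks. First, Lemma~\ref{lem:randomwalk} only gives that the $\mathcal{L}_i$ are \emph{stochastically dominated} by independent geometrics, not that they are themselves independent; your concentration step should be applied to the dominating family $\mathcal{L}'_i$ rather than to the $\mathcal{L}_i$ directly (the paper makes this coupling explicit). Second, the paper uses Chebyshev rather than a Chernoff-type bound here: it computes $\mathrm{Var}[\mathcal{L}']=\OO(n^2/\log^2 n)$ and obtains failure probability $\OO(1/\log^2 n)$. Your Chernoff route would give a sharper tail but requires a slightly less standard inequality for sums of independent geometrics with varying parameters; either works for the $o(1)$ claimed.
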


\begin{proof}
There are two main contributions to the running time of the Algorithm. First the subroutine \texttt{IncreaseMatching}, which we prove to be fast with the help of Lemma \ref{lem:randomwalk}, and secondly the updating and revealing of the $d$-neighborhood.

Recall that $\mathcal{L}_i$ is the random variable corresponding to the number of calls of \texttt{IncreaseMatching} in line~13, while $|A_0| = i$  for any $i\in [n]$. We set $\mathcal{L} = \sum_{i=1}^n \mathcal{L}_i$. Note that we can ignore the calls in line 5 of \texttt{FastPerfectMatching}, as these add only at total of $\OO(n)$ to the run time.
From Lemma~\ref{lem:randomwalk} we know that there exists a  coupling to a geometrically distributed random variable $\mathcal{L'}$ such that $\mathcal{L'}_i \succeq \mathcal{L}_i$ and $\mathcal{L'}_i$ \emph{is} geometrically distributed with $p_i = \frac{i \cdot d(i)}{100 n}$.

From the definition of $\mathcal{L'}_i$ we know that  $\mathbb{E}[ \mathcal{L'}_i] =   \frac{100n}{i\cdot d(i)} $ and $Var[\mathcal{L'}_i] = \frac{1-p_i}{p_i^2} \le \frac{1}{p_i^2} \le  ( \frac{100 n}{ i \cdot d(i) } )^2 $.
Recall that $d(i) = \sqrt{n/i}$ whenever $i \ge 
 \frac{n}{(\log n)^2}$. The total time used for those sets can thus be bounded in expectation by
\[\sum_{i=\frac{n}{(\log n)^2}}^n \mathbb{E}[\mathcal{L'}_i ] = \OO\left(\sum_{i=1}^n \frac{ \sqrt{n}}{  \sqrt{i} } \right) =  \OO(n), \]
as $\sum_{i=1}^n i^{-1/2} \le \int_0^n \! x^{-1/2} \, \mathrm{d}x   = 2\sqrt{n}$.
If $i \le \frac{n}{\log(n)^2}$, then $d(i)=\log n$, and the total expected time used for these sets is thus bounded by
\[\sum_{i=1}^{\frac{n}{(\log n)^2}}  \mathbb{E}[\mathcal{L'}_i] = \OO\left( \sum_{i=1}^{n}  \frac{ n}{  i \cdot \log(n) } \right)  =  \OO(n). \]
We thus have that $\E[\mathcal{L'}] = \Theta(n)$ as well. To show that the actual run time is concentrated around the expectation we apply Chebyshev's inequality. A similar case distinction as above gives us

\begin{eqnarray*}
Var[\mathcal{L'}] &\le&  
\sum_{i=\frac{n}{(\log n)^2}}^n \frac{10000 n}{  i } + \sum_{i=1}^{\frac{n}{(\log n)^2}}  \frac{10000 n^2}{  i^2\cdot  (\log n)^2 } \; = \; \OO\left(\frac{n^2}{(\log n)^2} \right).
\end{eqnarray*}
By Chebyshev's inequality we thus get

\[Pr[\mathcal{L} \ge 2\E[\mathcal{L'}]] \le Pr[\mathcal{L'} \ge 2\E[\mathcal{L']}] \le \frac{Var[\mathcal{L'}]}{(\E[\mathcal{L'])}^2} \le \OO\left( \frac{1}{(\log n)^2} \right),\]
which concludes the first part of the proof.
%Also the operations within \texttt{IncreaseMatching} can be done in constant time.
\vspace{1em}

To bound the time needed to expose the $d$-neighborhoods, we observe first that we can order the vertices in $A$ by the order in which they join the matching. As $N_{d'}(v) \subseteq N_d(v)\  \forall d'\le d$, we thus have to expose for the $i$-th vertex in this ordering  at most $d(n-i ) +1$  edges, where $d(x) = \min\{\sqrt{n/x},\log n\}$. Thus, the total number of exposed edges is bounded by

\begin{eqnarray*}
\sum_{i = 1}^n   (d(n-i )+1) &=&    \sum_{i = 1}^{\frac{n}{(\log n)^2}}   (d(i )+1) + \sum_{i = \frac{n}{(\log n)^2}}^n   (d(i )+1 )                                            \\
&\le & \sum_{i = 1}^{\frac{n}{(\log n)^2}}( \log n + 1) + \sum_{i = \frac{n}{(\log n)^2}}^n          \sqrt{\frac{n}{i}}\;\le\;  4n    .         
\end{eqnarray*}     

\noindent

Additionally we show the number of calls to the \bnwn{} function is at most $\log n$ for every vertex w.h.p.. For any $v \in B$ we call \bnwn{$v$} exactly once for each time it appears as the matched partner of \bnwn{$v$}. As the distribution on the  neighbors is uniform on $A$ and we only use \texttt{IncreaseMatching} $\OO(n)$ many times in total, the probability that $v \in B$ occurs at least $\log n$ times is at most

\[ \binom{ \OO(n)}{\log n} \left(\frac{1}{n}\right)^{\log n} =  \OO(n^{-2}), \]
with room to spare. We can thus apply a union bound over all vertices in $B$ to see that w.h.p.\ no vertex 
in $B$  has more than $\log n$ calls to $\bnwn{}$. Clearly the same holds for vertices in $A$, as we only expose the $d$-neighborhood and $d(|A_0|) \le \log n$ always. 
This concludes the proof of Proposition~\ref{prop:fastalgo}.
\end{proof}

%%%%%%%%%%%% phase2 %%%%%%%%%%%%%%%
\subsubsection{Phase 2: Incorporating the Cycle Factor}

In the previous section we have seen that we can find a perfect matching in $\OO(n)$ time. In this section we show how we can extend this algorithm to find a Hamilton cycle. To do this we first call the perfect matching algorithm {\em twice}, reseting the $d$-neighborhoods after the first run. By our assumption on the independence on the calls to the function \nwn{}, we thereby get two {\em independent} random perfect matchings. Their union forms a union of cycles (or double edges) covering all vertices (if the number of vertices was odd we add the single vertex excluded in phase 1 here back as a cycle with one vertex). Our task in this phase is to join these cycles into a single cycle.
We start with a lemma that bounds the number of cycles that we need to join.

\begin{lemma}
The union of two random independent perfect matchings in a bipartite graph contains at most $2\log n$ cycles with high probability.
\end{lemma}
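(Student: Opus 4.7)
The plan is to reduce the claim to the classical fact that a uniform random permutation of $m$ elements has $\Theta(\log m)$ cycles with high probability. First I would identify each perfect matching $M$ between $A$ and $B$ with the bijection $\pi_M \colon A \to B$ it defines, and observe that the $2$-regular bipartite graph $M_1 \cup M_2$ decomposes into cycles that alternate between $A$ and $B$. Starting from $a \in A$ and alternately following $M_2$ and then $M_1^{-1}$, one traces $a \mapsto \pi_2(a) \mapsto \pi_1^{-1}(\pi_2(a))$, so the cycles of $M_1 \cup M_2$ are in one-to-one correspondence with the cycles of the permutation $\sigma := \pi_1^{-1} \circ \pi_2$ on $A$, with a $\sigma$-cycle of length $\ell$ corresponding to a bipartite cycle of length $2\ell$. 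Since $\pi_1$ and $\pi_2$ are independent uniformly random bijections, $\sigma$ is itself a uniformly random permutation on $m = |A|$ elements.

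Next I would invoke the standard estimates on the cycle count $C$ of a uniform random permutation on $m$ elements: $\mathbb{E}[C] = H_m$ and $\operatorname{Var}(C) = H_m - H_m^{(2)} \le H_m$, both of which follow from the Feller coupling writing $C = \sum_{k=1}^{m} X_k$ with independent indicators $X_k \sim \operatorname{Bern}(1/k)$. Chebyshev's inequality then gives
\[
\Pr[C \ge 2 H_m] \;\le\; \frac{\operatorname{Var}(C)}{H_m^2} \;=\; O\!\left(\frac{1}{\log n}\right),
\]
and since $2 H_m \le 2 \log n$ for $n$ large enough, the claim follows.

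The point that needs most care is the implicit assertion that the two matchings produced by two successive runs of \texttt{FastPerfectMatching} may be treated as uniform and independent perfect matchings. I would justify this from the symmetry of the algorithm: \bnwn{$v$} returns uniform, independent samples in the opposite partition and the algorithm only distinguishes vertices by their matched/unmatched status, so the whole procedure commutes with any relabelling of $A$ and of $B$. Conditional on successful termination the output distribution is therefore exchangeable under such relabellings and hence uniform on perfect matchings; resetting the neighborhood data structures before the second call yields an independent second uniform matching, making the reduction legitimate.
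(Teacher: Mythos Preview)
Your proof is correct and follows essentially the same route as the paper: identify the union of two independent uniform matchings with the permutation $\sigma = \pi_1^{-1}\circ\pi_2$, observe that $\sigma$ is uniform, and invoke concentration for its cycle count (the paper simply cites the literature for concentration, whereas you spell out the Feller coupling and Chebyshev explicitly). Your final paragraph, arguing from symmetry that the algorithm's output matchings are uniform and independent, goes beyond what the lemma as stated requires but usefully addresses a point the paper only asserts without justification.
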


\begin{proof}
We claim that
the two independent perfect matchings can be seen as a random permutation of $[n/2]$. Indeed, without loss of generality we may assume that $M_1$ is just the identity (by renumbering the vertices appropriately). $M_1$ and $M_2$ are independent which implies $M_2$ corresponds to a random assignment of $B$ to $A$. The union of the two matchings thus defines a random permutation of $A$.

For random permutations the number of cycles has been well studied and is related to the Stirling numbers of the first kind. Using a double counting argument one can easily see that the expected number of cycles of length $2k$ will be $1/k$. The total expected number of cycles is thus equal to the $n$th harmonic number. It is also well-known that this random variable is concentrated, see e.g. \cite{arratia2003logarithmic} or \cite{arratia1992cycle, maples2012number}. Thus,  with high probability the number of cycles is bounded by $2 \log n$, as claimed.
\end{proof}

\noindent \textbf{Description of Algorithm~\ref{alg:joincycles} \texttt{JoinCycles}.}
To glue the cycles together we proceed in three phases. First we greedily combine cycles into a path, until this path has length at least $3n/4$. Then we incorporate the remaining cycles one by one using Algorithm~\ref{alg:addsinglecycle} \texttt{AddSingleCycle}. Finally, we close the Hamilton path into a Hamilton cycle (Lemma~\ref{lem:closecycle}).

The idea behind the first phase is straightforward. We start with an arbitrary cycle and break it apart into a path $P$. Consider the endvertex $p_{end}$ of that path. We use \nwn{} to query a new neighbor of $p_{end}$. If that neighbor is in a new cycle (which will happen with probability at least $1/4$, as long as the path $P$ contains at most $3n/4$ vertices), we  attach that cycle to $P$, thereby also getting a new endpoint $p_{end}$. If the latter did not happen, we query a new neighbor.
In order to ensure that we do not query to many vertices from a single vertex, we repeat the query for new neighbors at most $40\log n$ times. If we have not been successful by then, we give up. It is easy to see that the probability for ever giving up at this stage of the algorithm is bounded by $o(1)$. It is also easy to see that the total time spent until the path has length at least $3/4n$ is bounded by $\OO(n)$.

Once the path has length at least $3n/4$, the probability that a new neighbor is in one of the remaining cycles gets too small (for our purpose) and we thus change strategy. In particular, we add long Posa rotations, so that we can try various endpoints. This is the purpose of the procedure \texttt{AddSingleCycle} (Algorithm~\ref{alg:addsinglecycle}). 

We use a set $U$ to keep track of {\em used} vertices. Those are vertices for which we already queried neighbors within the algorithm \texttt{JoinCycles}. We denote the current path by $P = (p_{start}, .. ,p_{end})$. We also assume that we have access to a function $pred_P(v)$ that determines the vertex before $v$ on the path (null for $p_{start}$), and a function $half_p(v)$ which is true iff $v$ is in the first half of $P$. We denote the cycle $C$ that we want to add as $C = (c_{start},...,c_{end})$, where  $c_{start}$ is an arbitrary vertex at which we cut $C$ into a path. We now explore neighborhoods of vertices at once. To do this we denote by \nwn{$v, 40\log n$}  the set of vertices that we obtain if we apply \nwn{$v$} $40\log n$  times. Let $N_{start} = P\; \cap\; $ \nwn{$c_{start}, 40\log n$} and  $N_{end} =  P \;\cap\; $ \nwn{$c_{end}, 40\log n$} denote the intersections of these neighborhood vertices with the path $P$. 
Until the cycle $C$ is part of the path $P$ we do the following (Figure~\ref{fig:addsinglecycle}).
Let $N(p_{end}) =$ \nwn{$p_{end}, 40\log n$} and check for all $v\in N(p_{end})$ if $pred_P(v) \in N_{start}$. If so we also check if $half_p(v)$ is true. 

If we find a vertex $v$ for which both conditions hold, we join the cycle here. To do this look at $N_{end}$ and take a vertex $q \in$  \nwn{$c_{end}, 40\log n$}  such that $half_P(q)$ is false and $pred_P(q) \nin U$. Then we add the cycle to the path by constructing the new path $P_{new} = (p_{start},..., pred_P(v)) + (c_{start},..., c_{end}) + (q,...,p_{end}) +  (v,..., pred_P(q))$. Then add $p_{end}$, $c_{start}$ and $c_{end}$ to the used vertices $U$. (If we cannot find $q$ we abort the algorithm; it will be easy to show that the probability that this happens is negligible.)

If the check fails for all $v \in N(p_{end})$ we perform a Posa rotation. To do this is we take a $v \in N(p_{end}), v \not= p_{start}$, such that $half_P(v)$ is true and such that $pred_P(v)\nin U$, and then take a $q \in$  \nwn{$pred_P(v), 40\log n$} such that both $half_p(q)$ is false and $pred_P(q)$ is unused. We then use $v$ and $q$ to construct a new path with a new endpoint, namely $P_{new} = (p_{start},..., pred_P(v))+ (q,...,p_{end}) +  (v,..., pred_P(q))$. Now we can repeat the above procedure with $P_{new}$ and the new endpoint $p_{new end} = pred_P(q)$. (If we cannot find $v$ or $q$ we abort the algorithm; again it will be easy to show that the probability that this happens is negligible.)

\begin{figure}[t]
\centering \includegraphics[width = \textwidth]{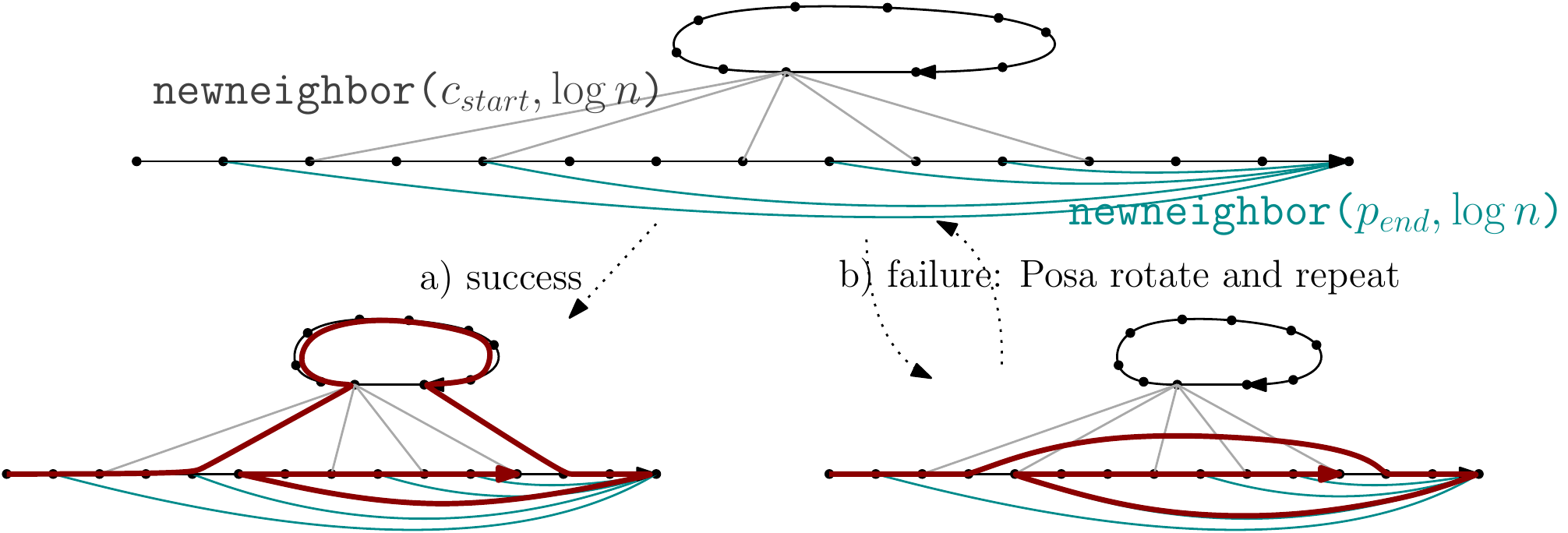} 
\caption{Incorporating a single new cycle with \texttt{AddSingleCycle}. The dark red path indicating the new Path after an iteration of the while loop.}
\label{fig:addsinglecycle}
\vspace{1em}
\end{figure} 

 \begin{algorithm}[t]
 \caption{$ JoinCycles(G, \mathcal{C} = M_1 \cup M_2 )$} \label{alg:joincycles}
 \begin{algorithmic}[1]
 \State $U\leftarrow \{\}$
 \State $C_0 \leftarrow$ first cycle of $M_1 \cup M_2$, $(c_{0, start},..., c_{0, end})$;
 \State $P \leftarrow (c_{0, start},..., c_{0, end})$;
 \State $p_{end} \leftarrow$ last vertex of P;
 \While{$|P| \le \frac{3n}{4} $}
 \State $N \leftarrow$ \nwn{$p_{end} , 40\log n$};
 \State $U \leftarrow$ add $p_{end}$;
 \State $v \leftarrow$ Search $N$ for $v$ such that $v \nin P$
 \State $(v,..., c_{i, end}) \leftarrow$ cycle of $v$;
 \State $P \leftarrow P + (v,..., c_{i, end})$;
 \State $p_{end} \leftarrow c_{i, end}$;
 \EndWhile
 \While{$|P|  \ne n $}
 \State $C_i \leftarrow$ any cycle not in $P$
 \State $AddSingleCycle(G, P, C_i, U)$ \Comment{See Algorithm~\ref{alg:addsinglecycle}}
 \EndWhile
 \State \Return
 \State // If any of the `Search' parts of the algorithm fail, we abort the algorithm and return failure.
 \end{algorithmic}
 \end{algorithm}

 \begin{algorithm}[t] 
 \caption{$ AddSingleCycle(G, P, C_i , U)$} \label{alg:addsinglecycle}
 \begin{algorithmic}[1]
 \State{// Function $half_P(v)$ returns true if and only if $v$ is in the first half of $P$;}
 \State // For any vertex $v \in P$, $pred_P(v)$ denotes the vertex before $v$ on the path $P$ ;
 \State $p_{end} \leftarrow$ last vertex of $P$;
 \State $N_{start} \leftarrow $ \nwn{$c_{start}, 40\log n$} $\cap P$;
 \State $N_{end} \leftarrow$ \nwn{$c_{end} , 40\log n$};
 \State $U \leftarrow$ add $c_{start}$ and $c_{end}$;
 \While {true}
 \State $N \leftarrow$ \nwn{$p_{end} , 40\log n$};
 \State $U \leftarrow$ add $p_{end}$;
 \If{$\exists v \in N \ \ s.t. \ \ pred_P(v) \in N_{start}$ and $half_P(v) = true$}
 \State $q \leftarrow$ Search $N_{end}$ for $q$ such that $half_p(q)= false$ and $pred_P(q) \nin U$; 
 \State $P \leftarrow  (p_{start},..., pred_P(v)) + (c_{start},..., c_{end}) + (q,...,p_{end}) +  (v,..., pred_P(q))$;
 \State \Return
 \Else 
 \State$v \leftarrow$ Search $N$ for $v$ such that $half_P(v) = true$ ;
 \State $N \leftarrow$ \nwn{$pred_P(v) , 40\log n$};
 \State $U \leftarrow$ add $pred_P(v)$;
 \State $q \leftarrow$ Search $N$ for $q$ such that $half_P(q) = false$ and $pred_P(q) \nin U$ ;
 \State $P \leftarrow  (p_{start},..., pred_P(v))+ (q,...,p_{end}) +  (v,..., pred_P(q))$; 
 \State $p_{end} \leftarrow pred_P(q)$;
 \EndIf
 \EndWhile
 \State // If any of the `Search' parts of the algorithm fail, we abort the algorithm and return failure.
 \end{algorithmic}
 \end{algorithm}

To store the path and cycles we use AVL trees with a linked list. The linked list just stores the vertices in the order as they appear in the path resp.\ cycle. For the AVL tree we take the ordering in the path/linked list as an ordering of the vertices. With this ordering at hand, the AVL tree is well defined, and it allows for searching resp. answering the query $half(v)$ in $\OO(\log n)$ time. In addition, splitting the path resp.\ concatenating two paths correspond to splitting an AVL tree at a given vertex (into a tree containing all smaller vertices and a tree containing the remaining vertices) resp. concatenate two AVL trees in which the largest vertex in one tree is smaller than the smallest vertex in the other tree. It is well known that both of these operations can be done for AVL trees in $\OO(\log n)$ time, cf. Lemma~\ref{lem:avltree} in Section~\ref{sec:datastructures} for more details. 

\begin{proposition}
	\label{prop:addcycle}
	Applying the procedure \texttt{AddSingleCycle} at most $2 \log n$ times will run in time $\OO(n)$ with high probability.
\end{proposition}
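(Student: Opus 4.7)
The plan is to decouple the bound into two parts: the cost of a single iteration of the while loop in \texttt{AddSingleCycle}, which I will show is $\OO(\log n)$, and the total number of iterations across the at most $2\log n$ calls, which I will show is $\OO(n/\log n)$ with high probability. Their product gives the claimed $\OO(n)$.

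For the per-iteration cost, I would store the path as a doubly linked list with a pointer from each vertex to its list node (so $pred_P(v)$ is $\OO(1)$), augmented by an AVL tree on the same sequence so that $half_P(v)$ and the splits/concatenations implementing a Posa rotation cost $\OO(\log n)$ (Lemma~\ref{lem:avltree}); the sets $N_{start}, N_{end}, U$ live in hash tables for $\OO(1)$ lookups. Lines~8 and~16 of Algorithm~\ref{alg:addsinglecycle} then cost $\OO(\log n)$ by Lemma~\ref{lem:newneighbor}. The scan in line~10 performs $\OO(1)$ work per vertex of $N$ and pays the $\OO(\log n)$ cost of $half_P(v)$ only in the $\OO(|N|\cdot|N_{start}|/n) = \OO(\log^2 n/n) = o(1)$ expected events that the hash test $pred_P(v)\in N_{start}$ fires. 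Lines~15 and~18 find qualifying vertices after $\OO(1)$ tries in expectation (since a random vertex lies on either side of $P$ with probability roughly $1/2$ and the used set $U$ is $o(n)$), so each line costs $\OO(\log n)$ amortized. Combined with the $\OO(\log n)$ path surgery, an iteration is $\OO(\log n)$.

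To bound the iteration count I lower-bound the probability that the if-branch in line~10 triggers. Since $|P|\ge 3n/4$ throughout and $N_{start}$ is the intersection with $P$ of $40\log n$ uniform samples, Chernoff gives $|N_{start}|\ge c_1\log n$ with high probability. The good set $S = \{v \in P : pred_P(v)\in N_{start} \text{ and } half_P(v)\}$ is in bijection with $\{u \in N_{start} : succ_P(u) \text{ lies in the first half of } P\}$, a sum of $\ge c_1\log n$ Bernoulli variables each of parameter at least $1/4$ (because the first half of $P$ contains $\ge 3n/8$ vertices), and hence $|S|\ge c_2\log n$ by Chernoff. Since the $40\log n$ queries forming $N$ are independent of $N_{start}$ and uniform on $V-p_{end}$,
\[
\Pr[\text{iteration succeeds}] \;\ge\; 1-(1-|S|/n)^{40\log n} \;\ge\; \frac{c_3\log^2 n}{n}.
\]
The iteration count of a single call is therefore stochastically dominated by a geometric with parameter $\Omega(\log^2 n/n)$, of expectation $\OO(n/\log^2 n)$. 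Summed over the $\le 2\log n$ calls the expectation is $\OO(n/\log n)$, and since the variances add to $\OO(n^2/\log^3 n)$, Chebyshev yields the same bound with probability $1-o(1)$. Multiplying by the $\OO(\log n)$ per-iteration cost gives $\OO(n)$.

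The main obstacle is to maintain $|S|\ge c_2\log n$ at \emph{every} iteration, despite the fact that Posa rotations reshuffle $P$ and thereby redefine which vertices sit in the first half. The point is that $N_{start}$ is fixed at the opening of the call whereas the rotations are driven by fresh, independent \nwn{} calls; conditionally on the path $P_t$ at any iteration, the distribution of $N_{start}$ remains close enough to uniform that Chernoff still applies, and a union bound over the $\OO(n/\log^2 n)$ iterations keeps the failure probability $o(1)$. The residual verifications---that the ``Search'' calls in lines~11, 15, 18 fail with probability $o(1)$ (routine because $|U| = \OO(n/\log n)$ and $40\log n$ fresh uniform samples are examined at each call) and that no vertex is queried by \nwn{} more than $100\log n$ times so that Lemma~\ref{lem:newneighbor} applies---are straightforward Chernoff and union-bound calculations.
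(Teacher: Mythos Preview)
Your proposal is correct and follows essentially the same approach as the paper. Both arguments decompose the work into (number of while-loop iterations) $\times$ (cost per iteration), bound the good set $S$ via Chernoff and a union bound over the $\OO(n/\log n)$ path configurations (using that the rotations are driven by \nwn{} calls independent of $N_{start}$), and then control the total iteration count by a sum of geometrics with Chebyshev/negative-binomial concentration; the paper computes a per-\nwn{}-call success probability of $\log n/n$ while you compute the equivalent per-iteration probability $\Omega(\log^2 n/n)$, and the paper aggregates the $half_P$ calls in lines~15/18 explicitly via a negative binomial rather than your ``amortized'' phrasing, but these are presentational rather than substantive differences.
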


\begin{proof}%[Proof of Proposition~\ref{prop:addcycle}]
We want to bound the number of Posa rotations we need to perform while we add at most $2\log n$ cycles. Each Posa rotation occurs at the end of a while loop in the pseudocode.

To incorporate a cycle we want to find a vertex $v$ which, in the order of the path, is right after a vertex in $N_{start}$ and is in the first half of $P$. $P$ has size at least $3n/4$ so the number of vertices in the first half is at least $n/4$. A random vertex therefore has a chance of at least $1/4$ to be in the first half of $P$. So every vertex in \nwn{$c_{start}, 40\log n$} has probability at least $1/4$ independently of being in the first half of $P$ and different from the other vertices. This implies that the number of vertices in $N_{start}$ which are also in the first half of $P$ dominates a binomial distributed random variable $F \sim Bin(40\log n, 1/4)$. For $F$ we know the expectation to be $10 \log n$ and by a Chernoff bound (\ref{thm:chernoff}) the probability that $F$ is less than $ \log n$ is $\OO(n^{-2})$. We observe that where the Posa rotation happens is independent of $N_{start}$. So we apply a union bound that on fixed $\OO(n)$ many rotations of $P$ the probability that there are less than $\log n$ vertices of $N_{start}$ in the first half of $P$ is in $\OO(n^{-1})$. This implies that any call to \nwn{$p_{end}$} has a chance of at least $\log n / n$ to be right after a vertex in $N_{start}$ and also in the first half of $P$. As each call to \nwn{} is independent, the number of tries we must make is geometrically distributed with success probability $\log n / n$ and we must succeed at most $2\log n$ many times. This means the number of Posa rotations is dominated by a negative binomial distributed random variable $R \sim NB(2\log n, \log n / n)$. So by the concentration of the negative binomial distribution (Lemma~\ref{cor:nbcon}) the probability that we need to try more than $4n$ times is at most $\OO(\log^{-1} n)$. Before every Posa rotation we try \nwn{$p_{end}, 40\log n$} so $40 \log n$ tries. This proves an upper bound on the number of Posa rotations of $\OO(n / \log  n)$ with high probability.

\textbf{Posa rotation:} We summarize the operations we need to do per Posa rotation. This assumes that we already failed to find $v$ which is both after a vertex in $N_{start}$ and also in the first half of $P$. We expose $40 \log n$ new neighbors of $p_{end}$ and $40 \log n$ of the vertex before $v$ on the path, we need to Posa rotate by splitting the path twice and then joining twice. Checking whether a vertex is in $U$ and adding vertices to $U$ is a constant time operation with a lookup table. All of these operations by choice of proper datastructure (Lemma~\ref{lem:newneighbor} and \ref{lem:avltree}) are done in $\OO(\log n)$. So over all Posa rotations these sum up to a runtime of at most $\OO(n)$. Additionally we need to find the vertex $v$ in the first half of $P$ with $pred_P(v) \nin U$. Since $U$ is much smaller than $n/8$ and $|P| \ge 3n/4$ the number of possible vertices is at least $n/4$. This means that if we test a random vertex, the probability that $half_p()$ returns true and its predecessor is not in $U$ is at least $1/4$. So the number times we need to call $half_P()$ is dominated by a geometric distribution with success probability $1/4$. Similarly to find the vertex $q$ in the second half of $P$ with $pred_P(q) \nin U$, the number of times we need to call $half_P()$ is also dominated by a geometric distribution with success probability $1/4$. So over all rotations, the number of times we need to call $half_P()$ is dominated by a negative binomial distribution $H \sim NB(2 \cdot \OO(n/ \log n), 1/4)$. So by the concentration of the negative binomial distribution (Lemma~\ref{cor:nbcon}) the probability that we need to call $half_P$ more than  $\OO(n/ \log n)$ times is $\OO(\log n / n)$. And since we can perform $half_P()$ in time $\OO(\log n)$ by Lemma~\ref{lem:avltree} these have a total runtime of $\OO(n)$ with high probability.

\textbf{Incorporating cycles:} Very similarly we bound the time we need to incorporate the cycles. To find the vertex $v$ which in the order of the path is right after a vertex in $N_{start}$ and is in the first half of $P$ we need to call $half_P$ until we succeed. Note that since $|N_{start}| \le 40\log n$ and as we proved above at least $\log n$ vertices of them are in the first half of $P$, every call to $half_P()$ from a random vertex after a vertex in $N_{start}$ has a chance of succeeding of at least $1/40$. This means the number of times we call $half_P$ is again dominated by a negative binomial distribution $NB(2\log n, 1/40)$ and this runtime is negligible with high probability. As we only incorporate a cycle $2 \log n$ times, also the join and split operations as well as the exposing of $N_{end}$ and searching for $q$ are negligible compared to the $\OO(n)$ runtime. 

Note also that we only call \nwn{} of vertices we then add to $U$ and then not again during the entire algorithm so no vertex has \nwn{} called more than $40\log n$ times. At most $\OO(n/ \log n)$ many vertices are added to $U$, and $U$ is small enough so that it is always much smaller than $n / 8$.

This concludes the proof of Proposition~\ref{prop:addcycle}.
\end{proof}

\begin{lemma}
\label{lem:closecycle}
Given a Hamilton path we can transform it to a Hamilton cycle in $\OO(n)$ time.
\end{lemma}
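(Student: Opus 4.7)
The plan is to reuse the Posa-rotation machinery of \texttt{AddSingleCycle}, but in a simpler one-cut form since no cycle has to be inserted. Let $P = (p_{start}, \ldots, p_{end})$ be the current Hamilton path and $U$ the used set inherited from \texttt{JoinCycles}, of size $\OO(n/\log n)$. We fix once $N_{start} := $ \nwn{$p_{start}, 40 \log n$}; by Lemma~\ref{lem:newneighbor} these are independent uniform samples from $V - p_{start}$, so a Chernoff bound gives that w.h.p.\ at least $\log n$ of them have path-predecessor lying in the first half of $P$ and outside $U$.

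We then iterate: expose $N := $ \nwn{$p_{end}, 40 \log n$} and look for a \emph{closing witness}, i.e., a vertex $w \in N_{start}$ with $pred_P(w) \in N$ and $half_P(w) = \text{true}$. If one is found, replacing the single path edge $\{pred_P(w), w\}$ by the two freshly-exposed edges $\{p_{end}, pred_P(w)\}$ and $\{p_{start}, w\}$ produces the Hamilton cycle $p_{start} \to \ldots \to pred_P(w) \to p_{end} \to \ldots \to w \to p_{start}$, realised in $\OO(\log n)$ time by $\OO(1)$ AVL-tree splits and concatenations (Lemma~\ref{lem:avltree}). Otherwise we perform a long Posa rotation as in \texttt{AddSingleCycle}: pick any $v \in N$ with $half_P(v) = \text{true}$ and whose path-successor is not in $U$, rewire $P$ into the new Hamilton path ending at that successor, and add $v$ and the old $p_{end}$ to $U$.

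For the analysis, each independent sample of \nwn{$p_{end}$} hits a closing witness with probability at least $\log n / n$, namely the density in $V$ of path-predecessors of good vertices in $N_{start}$. Hence a single iteration closes with probability $\Omega(\log^2 n / n)$, independently of the previous ones by Lemma~\ref{lem:newneighbor}, so the number of iterations is dominated by a geometric random variable with this parameter, and a standard concentration argument (Lemma~\ref{cor:nbcon}) bounds it by $\OO(n/\log^2 n)$ w.h.p.. Each iteration costs $\OO(\log n)$ for its $40 \log n$ \nwn{} calls together with the $\OO(1)$ AVL-tree operations, so the total running time is $\OO(n/\log n) = \OO(n)$, as required.

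The main obstacle is combinatorial rather than probabilistic: one has to verify that the closing witness really produces a single Hamilton cycle rather than two disjoint shorter ones, which is guaranteed by $half_P(w) = \text{true}$ forcing $pred_P(w)$ and $w$ to be strictly interior vertices in the first half of $P$. One also has to check that no vertex is queried by \nwn{} more than $100 \log n$ times; this follows because each vertex queried in the loop plays the role of $p_{end}$ or of a rotation centre and is immediately added to $U$, and $|U|$ grows by only $\OO(1)$ per iteration so remains $\OO(n/\log n) \ll n$ throughout.
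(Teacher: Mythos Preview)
Your proposal is correct and follows essentially the same approach as the paper, which simply says to rerun \texttt{AddSingleCycle} with the singleton ``cycle'' $p_{start}$ (looking at successors rather than predecessors) and defers the runtime analysis to Proposition~\ref{prop:addcycle}. Your write-up is in fact more explicit than the paper's one-line proof, and slightly leaner in that you use a single-cut Posa rotation rather than the two-cut rotation of \texttt{AddSingleCycle}; this suffices here since no external cycle needs to be spliced in.

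One small remark: your final paragraph worrying about ``two disjoint shorter cycles'' is a non-issue. Removing the path edge $\{pred_P(w),w\}$ and adding the two edges $\{p_{end},pred_P(w)\}$ and $\{p_{start},w\}$ always yields a single Hamilton cycle regardless of where $w$ sits on the path (as long as $w\neq p_{start}$), so the condition $half_P(w)=\text{true}$ is not needed for correctness of the closing step---it is only there to mirror the bookkeeping of \texttt{AddSingleCycle} and does no harm.
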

\begin{proof}
Calling the algorithm \texttt{AddSingleCycle} with the cycle being $p_{start}$, but instead looking for $v$ such that a vertex after $v$ is in the neighborhood of $p_{start}$ instead of a predecessor gives us a cycle $C= (p_{start},..., v) + (p_{end},..., after_P(v))$. Analysis of runtime equivalent to the analysis of \texttt{AddSingleCycle}.
\end{proof}

Propositions \ref{prop:fastalgo} and \ref{prop:addcycle} as well as Lemma~\ref{lem:closecycle} show that all  components of the algorithm run in time $\OO(n)$. It is also easy to check that both phases together require at most $ 50 \log n$ calls to \nwn{} from any fixed vertex, so the assumptions of Lemma~\ref{lem:newneighbor} do hold. So choosing $C$ large enough, say $C=200$, suffices to guarantee that with high probability the random graph is such that all vertices have more neighbors than we query. This thus concludes the proof of Theorem~\ref{thm:main}.

%%%%%%%%%%%% Datastructures %%%%%%%%%%%%%%%
\section{Datastructures}
\label{sec:datastructures}

In this section we give the details of the data structures that we used within our algorithm. 

\subsection{Querying a new vertex}

As explained above, we assumed throughout the analysis of our algorithm that we have access to
a function \nwn{v}, that returns for a given vertex $v$ a neighbor $w$ that is {\em uniformly} distributed in $V-v$ and whose result is {\em independent} from all previous calls.

\newneighbor

\begin{proof} To realize such a function \nwn{}, it is important to make the adjacency lists independent of each other. To realize this we transform the graph $G$ (which is distributed as a random graph $G_{n,p}$) into a directed graph $G'$ distributed as $D_{n, p/2}$ (in which each directed edge is present independently with probability $p/2$). It is well know how this can be done. In particular, we can sample $D_{n, p/2}$ from $G_{n,p}$ as a subgraph such that every edge in the directed graph is also an undirected edge in the $G_{n,p}$. %Directing the graph is also a technique used in \cite{angluin1979fast}.
More precisely, we do the following for every edge $\{i,j\}$ of G: with probability
\begin{align*} 
\frac{1}{2} - \frac{p}{4} \  & set \  (i,j) \in G' \ and \ (j,i) \not \in G'\\
\frac{1}{2} - \frac{p}{4} \  & set \   (i,j) \not \in G' \ and \ (j,i) \in G'\\
\frac{p}{4} \  & set \   (i,j) \in G' \ and \ (j,i) \in G'\\
 \frac{p}{4}  \  & set \  (i,j) \not \in G' \ and \ (j,i) \not \in G'
\end{align*}

In order to be consistent with the transformation from $G$ to $G'$ and to not lose too much time we only direct the edges once we see it for the first time. To recall the made decision, we store the random choices of the edges that we have we encountered so far into a hashtable. Thus, we can check for each edge that we obtain from querying the adjacency list of a vertex in $G$, whether we have seen this edge already and if so, which orientation we have chosen. The hash table has size $n$ and we use a hashfunction which is $4$-wise independent. This way the variance of the number of collisions is equal to a random function, and therefore the time we need for hashing is $\OO(n) + \OO(number\ of\ collisions) = \OO(n)$, which can be seen by applying Chebyshev's inequality. A more detailed argument of why linear probing with hash functions works in this context can be found in \cite{pagh2007linear, thorup2004tabulation}.

Finally, we want the distribution of the next edge to be uniform among the vertices. For this we need to resample from the already seen edges. Assuming we have revealed $d$ many edges from $v$ we flip a biased coin. With probability $d/(n-1)$ we retake an old neighbor and output it, one chosen uniformly at random, and with probability $1-d/(n-1)$ we take the next vertex in the adjacency list (which is also in $D_{n,p}$). Otherwise, we return one of the previously seen neighbors uniformly at random. In this way any vertex has probability exactly $1/(n-1)$ to be returned by \nwn{v}. %The sampling with probability $d/(n-1)$ is done in $\OO(n)$ time over all calls.

\end{proof}

\subsection{Expansion}
\label{sec:expansion}

What we need from the random graph are properties of good expansion. Given the adjacency list of a vertex $v$ we define the $d$-neighborhood $N_d(v) \subseteq V(G)$ to be the set of the first $\lceil d\rceil$ calls to the function \bnwn{$v$}. In the analysis of the algorithm we make use of the following lemma.

\begin{lemma}[Neighborhood Lemma]
\label{lem:neighborhood}
Let $G_{n/2,n/2,p}$ be a random bipartite graph with $p\ge \frac{C~\log(n)}{n}$ and partitions $A$ and $B$. Then with high probability we have for all subsets $A' \subseteq A$ that the $d$-neighborhood of $A'$ is of size at least

\begin{equation} \label{eqn:rndm}
|N_{d(A')}(A')|  \ge \frac{1}{100} |A'| \cdot d(A') ,
\end{equation}
where $d(A') = min(\sqrt{\frac{n}{|A'|}} , \log(n))$. 
\end{lemma}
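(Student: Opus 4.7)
The plan is to reduce the statement to a balls-and-bins estimate and close it with a two-level union bound. Provided we stay inside the query budget of Lemma~\ref{lem:newneighbor} --- which is the case because $\lceil d(A')\rceil \le \log n$ per vertex and only $O(n)$ calls in total --- every call \bnwn{v} for $v\in A$ returns a uniformly random vertex of $B$ independent of all earlier calls. Fixing $A'\subseteq A$ with $|A'|=k$ and $d=d(k)$, the multiset $\bigsqcup_{v\in A'}N_d(v)$ is therefore exactly $m:=k\lceil d\rceil$ i.i.d.\ uniform samples from $B$, and $|N_d(A')|$ is the number of distinct values among them. We need this count to be at least $\tfrac{1}{100}kd$ for every $A'$ at once; since $\lceil d\rceil\le d+1\le 2d$, this is implied by $X := |N_d(A')|\ge m/200$, so it suffices to control the latter.

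For a fixed $A'$ I would use a containment bound: $X\le m/200$ forces all $m$ samples to lie in some set $S\subseteq B$ of size $\lfloor m/200\rfloor$, and a union bound over $S$ gives
\[
\Pr\!\left[X\le \tfrac{m}{200}\right]\ \le\ \binom{n/2}{\lfloor m/200\rfloor}\!\left(\frac{m}{100\,n}\right)^{\!m}.
\]
Multiplying by $\binom{n/2}{k}$ to handle all size-$k$ subsets reduces the task to verifying that
\[
F_k\ :=\ \log\binom{n/2}{k}+\log\binom{n/2}{\lfloor m/200\rfloor}+m\log\frac{m}{100\,n}
\]
is negative enough that $\sum_{k=1}^{n/2} e^{F_k}=o(1)$; a final union bound over $k$ then yields the lemma.

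I would do this by a case split on $d(k)$. For $k\le n/(\log n)^{2}$, where $d=\log n$ and $m\le 2k\log n$, one has $100n/m\ge 50\log n$, so the term $m\log(m/(100n))$ contributes at most $-m\log(50\log n)=-\Omega(k\log n\,\log\log n)$; this easily absorbs the $k\log(en/(2k))=k\log(n/k)+O(k)$ from $\binom{n/2}{k}$ and the much smaller $(m/200)\log(100en/m)$ from $\binom{n/2}{m/200}$, giving $F_k\le -\Omega(k\log n\,\log\log n)$. For $k>n/(\log n)^{2}$, where $d=\sqrt{n/k}$ and $m\le 2\sqrt{nk}$, I substitute $q:=\sqrt{n/k}\in[\sqrt{2},\log n]$; the three contributions then have orders $(n\log q)/q^{2}$, $(n/q)\log q$, and $-(n/q)\log(100q)$, respectively, and the last dominates by a fixed multiplicative slack throughout. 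At worst (near $q=\log n$) this yields $F_k\le -\Omega(n\log\log n/\log n)$, and near $q=\sqrt{2}$ one even gets $F_k\le -\Omega(n)$. Summing $e^{F_k}$ over $k$ is then manifestly $o(1)$.

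The step that looks most delicate, and which I expect to be the main obstacle, is this second regime. There $m$ is of order $|B|=n/2$ and there are $2^{\Theta(n)}$ candidate sets $A'$, so off-the-shelf concentrations for the number of distinct values --- McDiarmid on the $m$ i.i.d.\ samples, or Chebyshev using the negative correlation of the indicators $\mathbf{1}[v\text{ is hit}]$ --- yield only polynomial-in-$n$ tails, far too weak to survive a union bound over exponentially many sets. The containment-in-$S$ estimate is what supplies the extra $(m/(100n))^m$ factor, which is essentially exponential in $\sqrt{nk}$; the careful piece is to check, constants and all, that this factor really does dominate $\log\binom{n/2}{k}+\log\binom{n/2}{m/200}$ \emph{uniformly} in $k\in(n/(\log n)^{2},n/2]$.
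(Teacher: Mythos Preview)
Your proposal is correct and follows essentially the same argument as the paper: both reduce to the containment event ``all $|A'|\cdot d$ samples land in a fixed small set $B'$'', take a union bound over $A'$ and $B'$, apply $\binom{n}{k}\le (en/k)^k$, and finish with the same two-case split according to whether $d(k)=\log n$ or $d(k)=\sqrt{n/k}$. Your version is in fact slightly more careful with the ceiling in $m=k\lceil d\rceil$ and with $|B|=n/2$ versus $n$, and your closing remark about why McDiarmid or Chebyshev would not suffice in the $q\approx\sqrt{2}$ regime is a nice addition, but the core method is identical.
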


\begin{proof} The proof follows from a straight forward calculation of probabilities. Let us assume by contradiction there exists a set $A' \subseteq A$ with $|N_{d(A')}(A')|  < \frac{1}{100} |A'| \cdot d(A') $. Then there is a set $B' \subseteq B$ of size $|B'| = \frac{1}{100} |A'| \cdot d(A')$ containing this $d$-neighborhood, $N_{d(A')}(A') \subseteq B'$. So this is a probability we want to bound from above. The probability for a single vertex in $A'$ to have its $d$-neighborhood contained in a fixed set $B'$ is $\left( \frac{|B'|}{n}\right)^{d(A')}$ since the $d$-neighborhood is $d(A')$ vertices chosen from $B$ uniformly and independently at random. The probability for two specific sets $A' \subseteq A$ and $B' \subseteq B$ to have this property is $\left( |B'|/ n\right)^{|A'|\cdot d(A') }$. Now take the union bound over all possible sets $A'$ and $B'$ (with $|B'| = \frac{1}{100} |A'| \cdot d(A')$):
\[Pr[\text{(\ref{eqn:rndm}) false}] \le \sum_{A',B'} Pr[B' \text{ contains $ N_{d(A')}(A')$}] =    \sum_{i=1}^{n}        {\binom{n}{i}}     {\binom{n}{\frac{1}{100} i \ d(i) }}   \left( \frac{\frac{1}{100} i \cdot d(i)}{n}\right)^{ i \cdot d(i) }     \]
Then we apply an approximation for the binomial coefficients: $\binom{n}{k} \le \left( \frac{en}{k} \right)^k$. We see that $\frac{1}{4}\log \frac{100n}{i \cdot d(i)} \ge \log(e \cdot n/i) / d(i)$ so

\[Pr[\text{(\ref{eqn:rndm}) false}]  \le  \sum_{i=1}^{n} exp \left(  i \cdot  d(i)\cdot \left( - \frac{1}{2} \log\left(\frac{100  n}{i\cdot d(i)} \right) \right)  \right)\]

Now $d(i)$ is a known function of $i$. So we distinguish between two cases. When $i \ge n/\log^2(n)$, then $d(i) = \sqrt{n/i}$. And we can calculate ($i\le n$)
\[\sum_{i=1}^{n}            \left(  \frac{ i^{1/4} \cdot  n^{1/4}  }{ 10 \cdot n^{1/2} } \right)  ^{\sqrt{n \ i}}  \le \sum_{i=1}^{n}            \left( \frac{1 }{ 10 } \right)  ^{\sqrt{n \ i}}  \le \OO(n^{-2}) \]
On the other hand if $i \le n/\log^2(n)$, then $d(i) = \log(n)$. And we can calculate 
\[\sum_{i=1}^{n/\log^2(n)}            \left(\frac{ i^{1/2} \cdot  \log(n)^{1/2}  }{ 10 \cdot n^{1/2} } \right)  ^{i \log(n)} \le\sum_{i=1}^{n/\log^2(n)}            \left( \frac{1  }{ 10 \cdot \log(n)^{1/2} } \right)  ^{i \log(n)}   \le \OO(n^{-2})\]
Together this implies that the lemma holds for random graphs with probability $ \ge 1-  \OO(n^{-2})$.

\end{proof}

\subsection{AVL Trees}

\begin{lemma}[AVL Trees]
\label{lem:avltree}
We can store a path (or cycle) in an AVL tree joint with a linked list datastructure and can perform the following operations (where we view the cycle as a path split at an arbitrary point):
\begin{itemize}
\item For any vertex $v$ find the vertex preceding or succeeding it in the path in constant time $\OO(1)$
\item For any vertex $v$ searching the path it is in and determining whether it is in the first or second half of it in time $\OO(\log n)$
\item Split the path into two paths in time $\OO(\log n)$
\item Concatenate two paths into one by adding the endpoint of one to the start of the other in time $\OO(\log n)$
\end{itemize}
\end{lemma}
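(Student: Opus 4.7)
The plan is to use a standard AVL tree whose in-order traversal coincides with the order of vertices along the path (the cycle is handled by viewing it as a path cut at an arbitrary point). In parallel, we maintain a doubly linked list of the vertices in path order, and a global hash table that maps each vertex $v\in V$ to its AVL node. Each AVL node stores the usual data (left/right children, parent, height) plus the size of its subtree; these two extra fields can be updated in $\OO(1)$ per node touched, so they will not affect any asymptotic bound.

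The first two bullets are immediate. For predecessor/successor, I would just follow the pointers in the doubly linked list from the node found via the hash table, which costs $\OO(1)$. For ``which path is $v$ on and in which half of it does $v$ lie'', I would start at $v$'s AVL node and walk up to the root, accumulating along the way the rank of $v$ in the in-order traversal (adding the size of the left subtree plus one whenever the walk ascends from a right child). The root identifies the path (so two vertices are on the same path iff the walks reach the same root) and the total size stored at the root gives the path length; comparing the accumulated rank with half the root's size decides the half. Since an AVL tree of $n$ nodes has height $\OO(\log n)$, this costs $\OO(\log n)$. Note that I deliberately do \emph{not} store a ``path identifier'' at each node, because splits and joins would then force $\Theta(n)$ updates; instead the root itself is the identifier, and it is always found by an $\OO(\log n)$ climb.

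The split and concatenate operations are the standard \texttt{Split} and \texttt{Join} routines for height-balanced trees. For \texttt{Join}, given two AVL trees $T_1, T_2$ whose in-order traversals are to be concatenated, walk down the right spine of the taller tree (say $T_1$) until reaching a subtree whose height differs from that of $T_2$ by at most one, glue them through a new internal node, and rebalance upwards by a logarithmic number of single or double rotations. For \texttt{Split} at a node $v$, walk from the root down to $v$ and repeatedly \texttt{Join} the accumulated left (resp.\ right) pieces; the classical amortized argument shows that the sum of the height differences encountered along the spine telescopes to $\OO(\log n)$, so the overall cost of the split is $\OO(\log n)$. The linked list is adjusted in $\OO(1)$ by cutting/reconnecting the appropriate $\mathrm{prev}/\mathrm{next}$ pointers around the split point or the join point, and the hash table is not modified at all since vertex-to-node correspondences are preserved.

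The only mildly delicate step is verifying that \texttt{Split} really runs in $\OO(\log n)$ rather than $\OO(\log^2 n)$; this follows from the telescoping bound on the costs of the sequence of joins performed along the search path, as detailed in any standard treatment of AVL trees (or, equivalently, weight-balanced or red-black trees). Everything else is a routine update of $\OO(\log n)$ ancestor size/height fields, so all four operations meet their claimed bounds.
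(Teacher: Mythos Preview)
Your proposal is correct and follows essentially the same approach as the paper: an AVL tree keyed by path order together with a doubly linked list, with the standard \texttt{Split}/\texttt{Join} operations handling the last two bullets. The paper's own proof is much terser than yours and largely defers to references (Knuth~\cite{knuth1998art}, Crane~\cite{crane1972linear}) for the $\OO(\log n)$ split and concatenate bounds; your version spells out the subtree-size augmentation, the rank computation by walking to the root, and the telescoping argument for \texttt{Split}, which the paper omits. One minor simplification: since vertices are indexed by $\{1,\dots,n\}$, an array of node pointers suffices in place of your hash table.
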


\begin{proof}
We combine an AVL tree, which is a balanced binary search tree, with a linked list. The AVL tree is built on the order sequence of the path as if numbering the vertices along the path from $1$ to $|P|$. The linked list ensures that going forward and backward on the path is done in constant time, where the AVL tree can perform search (for the half function) in $\OO(\log n)$. A split of the path is nothing other than splitting the AVL tree at a leaf node into two trees such that all the nodes smaller go into one tree and all the nodes larger go into the other. The concatenate is the inverse of the split and only requires attaching the smaller tree to the larger one at the appropriate node and rebalancing up to the root. Both operations run in $\OO(\log n)$ time.AVL trees are by now a part of basic datastructure lectures and in particular the split and concatenate operations can be found e.g. in the book by Knuth~\cite{knuth1998art} see page 473, which also cites from \cite{crane1972linear} or more generally on AVL trees see \cite{pfaff1998introduction}.
\end{proof}

\section{Concluding remarks}

In this paper we presented a simple randomized algorithm based on iterative random walks that construct a Hamilton cycle in time linear in the number of vertices. Our algorithm is based on first building (two) random perfect matchings. The key idea here is to expose more and more edges of the currently unmatched vertices, where the exact number of these exposed neighbors is a function of the currently unmatched vertices. Our analysis requires that the density of the random graph $G_{n,p}$ is at least $p\ge \frac{C \log n}{n}$. $C$ is chosen such that we have a sufficient minimum degree with high probability.

We leave it as an open question whether our approach can be modified to also find Hamilton cycles in $G_{n,p}$ for $p$ at the threshold for existence of Hamilton cycles. This certainly requires additional ideas.

%%%%% BIBLIOGRAPHY %%%%%%

%\bibliographystyle{abbrv}
\bibliographystyle{plainurl}% the mandatory bibstyle
\bibliography{hamcycles}

\begin{thebibliography}{10}

\bibitem{ajtai1985first}
Mikl{\'o}s Ajtai, J{\'a}nos Koml{\'o}s, and Endre Szemer{\'e}di.
\newblock First occurrence of {H}amilton cycles in random graphs.
\newblock {\em North-Holland Mathematics Studies}, 115(C):173--178, 1985.

\bibitem{allen2015tight}
Peter Allen, Julia B{\"o}ttcher, Yoshiharu Kohayakawa, and Yury Person.
\newblock Tight hamilton cycles in random hypergraphs.
\newblock {\em Random Structures \& Algorithms}, 46(3):446--465, 2015.

\bibitem{alon2020finding}
Yahav Alon and Michael Krivelevich.
\newblock Finding a {H}amilton cycle fast on average using rotations and
  extensions.
\newblock {\em Random Structures \& Algorithms}, 57(1):32--46, 2020.

\bibitem{angluin1979fast}
Dana Angluin and Leslie~G Valiant.
\newblock Fast probabilistic algorithms for {H}amiltonian circuits and
  matchings.
\newblock {\em Journal of Computer and System Sciences}, 18(2):155--193, 1979.

\bibitem{arratia2003logarithmic}
Richard Arratia, Andrew~D Barbour, and Simon Tavar{\'e}.
\newblock {\em Logarithmic combinatorial structures: a probabilistic approach},
  volume~1.
\newblock European Mathematical Society, 2003.

\bibitem{arratia1992cycle}
Richard Arratia and Simon Tavar{\'e}.
\newblock The cycle structure of random permutations.
\newblock {\em The Annals of Probability}, pages 1567--1591, 1992.

\bibitem{bohman2009hamilton}
Tom Bohman and Alan Frieze.
\newblock {H}amilton cycles in 3-out.
\newblock {\em Random Structures \& Algorithms}, 35(4):393--417, 2009.

\bibitem{bollobas1987algorithm}
Bela Bollobas, Trevor~I. Fenner, and Alan~M. Frieze.
\newblock An algorithm for finding {H}amilton paths and cycles in random
  graphs.
\newblock {\em Combinatorica}, 7(4):327--341, 1987.

\bibitem{bollobas1990hamilton}
Bela Bollob{\'a}s, Trevor~I. Fenner, and Alan~M. Frieze.
\newblock {H}amilton cycles in random graphs with minimal degree at least k.
\newblock {\em A tribute to Paul Erdos, edited by A. Baker, B. Bollob{\'a}s and
  A. Hajnal}, pages 59--96, 1990.

\bibitem{crane1972linear}
C.A. Crane.
\newblock {\em Linear Lists and Priority Queues as Balanced Binary Trees}.
\newblock Computer Science Department. Department of Computer Science, Stanford
  University., 1972.

\bibitem{fenner1984hamiltonian}
Trevor~I Fenner and Alan~M Frieze.
\newblock {H}amiltonian cycles in random regular graphs.
\newblock {\em Journal of Combinatorial Theory, Series B}, 37(2):103--112,
  1984.

\bibitem{ferber2016finding}
Asaf Ferber, Michael Krivelevich, Benny Sudakov, and Pedro Vieira.
\newblock Finding hamilton cycles in random graphs with few queries.
\newblock {\em Random Structures \& Algorithms}, 49(4):635--668, 2016.

\bibitem{frieze2019hamilton}
Alan Frieze.
\newblock {H}amilton cycles in random graphs: a bibliography.
\newblock {\em arXiv preprint arXiv:1901.07139}, 2019.

\bibitem{frieze1996generating}
Alan Frieze, Mark Jerrum, Michael Molloy, Robert Robinson, and Nicholas
  Wormald.
\newblock Generating and counting hamilton cycles in random regular graphs.
\newblock {\em Journal of Algorithms}, 21(1):176--198, 1996.

\bibitem{frieze1988finding}
Alan~M Frieze.
\newblock Finding hamilton cycles in sparse random graphs.
\newblock {\em Journal of Combinatorial Theory, Series B}, 44(2):230--250,
  1988.

\bibitem{gurevich1987expected}
Yuri Gurevich and Saharon Shelah.
\newblock Expected computation time for {H}amiltonian path problem.
\newblock {\em SIAM Journal on Computing}, 16(3):486--502, 1987.

\bibitem{janson2011random}
Svante Janson, Tomasz Luczak, and Andrzej Rucinski.
\newblock {\em Random graphs}, volume~45.
\newblock John Wiley \& Sons, 2011.

\bibitem{knuth1998art}
Donald~E Knuth.
\newblock {\em The art of computer programming: Volume 3: Sorting and
  Searching}.
\newblock Addison-Wesley Professional, 1998.

\bibitem{komlos1983limit}
J{\'a}nos Koml{\'o}s and Endre Szemer{\'e}di.
\newblock Limit distribution for the existence of {H}amiltonian cycles in a
  random graph.
\newblock {\em Discrete mathematics}, 43(1):55--63, 1983.

\bibitem{korshunov1976solution}
Aleksei~Dmitrievich Korshunov.
\newblock Solution of a problem of erd{\H{o}}s and renyi on {H}amiltonian
  cycles in nonoriented graphs.
\newblock {\em Doklady Akademii Nauk}, 228(3):529--532, 1976.

\bibitem{maples2012number}
Kenneth Maples, Ashkan Nikeghbali, and Dirk Zeindler.
\newblock On the number of cycles in a random permutation.
\newblock {\em Electron. Commun. Probab.}, 17:13 pp., 2012.

\bibitem{montgomery2019hamiltonicity}
Richard Montgomery.
\newblock Hamiltonicity in random graphs is born resilient.
\newblock {\em Journal of Combinatorial Theory, Series B}, 139:316--341, 2019.

\bibitem{nenadov2019resilience}
Rajko Nenadov, Angelika Steger, and Milo{\v{s}} Truji{\'c}.
\newblock Resilience of perfect matchings and hamiltonicity in random graph
  processes.
\newblock {\em Random Structures \& Algorithms}, 54(4):797--819, 2019.

\bibitem{pagh2007linear}
Anna Pagh, Rasmus Pagh, and Milan Ruzic.
\newblock Linear probing with constant independence.
\newblock In {\em Proceedings of the thirty-ninth annual ACM symposium on
  Theory of computing}, pages 318--327, 2007.

\bibitem{pfaff1998introduction}
Ben Pfaff.
\newblock An introduction to binary search trees and balanced trees.
\newblock {\em Libavl Binary Search Tree Library}, 1:19--20, 1998.

\bibitem{robinson1994almost}
Robert~W. Robinson and Nicholas~C. Wormald.
\newblock Almost all regular graphs are {H}amiltonian.
\newblock {\em Random Structures \& Algorithms}, 5(2):363--374, 1994.

\bibitem{rubinfeld2011sublinear}
Ronitt Rubinfeld and Asaf Shapira.
\newblock Sublinear time algorithms.
\newblock {\em SIAM Journal on Discrete Mathematics}, 25(4):1562--1588, 2011.

\bibitem{shamir1983many}
Eli Shamir.
\newblock How many random edges make a graph {H}amiltonian?
\newblock {\em Combinatorica}, 3(1):123--131, 1983.

\bibitem{sudakov2008local}
Benny Sudakov and Van~H Vu.
\newblock Local resilience of graphs.
\newblock {\em Random Structures \& Algorithms}, 33(4):409--433, 2008.

\bibitem{thomason1989simple}
Andrew Thomason.
\newblock A simple linear expected time algorithm for finding a hamilton path.
\newblock {\em Discrete Mathematics}, 75(1-3):373--379, 1989.

\bibitem{thorup2004tabulation}
Mikkel Thorup and Yin Zhang.
\newblock Tabulation based 4-universal hashing with applications to second
  moment estimation.
\newblock In {\em SODA}, volume~4, pages 615--624, 2004.

\end{thebibliography}

\newpage

%%%%% APPENDIX %%%%%%

\appendix

\section{Concentration Inequalities}

We mention here some well-known inequalities used to show concentration on random variables. By the concentration of a random variable $X$ we usually mean there exist constants $c$ and $C$ such that with high probability $c \E[X] \le X \le C \E[X]$. Often also we ask $C$ to be $2$. Although these are by no means new insights, we present them here for completion and as a help for the reader. The Chernoff and Chebyshev inequality can be found e.g. in the book \cite{janson2011random}.

\begin{theorem}[Chernoff Inequality] %(Corollary 2.3)
\label{thm:chernoff}
If $X$ is distributed as a binomial random variable $X \sim Bin(n,p)$ and $0 < \eps \le 3/2$, then
\[Pr[ |X-\E[X]|  \ge \eps \E[X] ] \le 2 e^{-\frac{\eps^2 \E[X]}{3}} \]
\end{theorem}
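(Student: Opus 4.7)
The plan is to prove the Chernoff bound via the standard moment generating function (MGF) method, handling the upper and lower tails separately and then combining them by a union bound to obtain the factor of $2$.

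First I would write $X = \sum_{i=1}^{n} X_i$ as a sum of independent Bernoulli$(p)$ variables, and compute the MGF of a single summand: $\E[e^{tX_i}] = 1 - p + p e^{t} \le \exp(p(e^{t}-1))$, using $1 + x \le e^{x}$. By independence, $\E[e^{tX}] \le \exp(\mu(e^{t}-1))$, where $\mu = \E[X] = np$. For the upper tail $\Pr[X \ge (1+\eps)\mu]$, I would apply Markov's inequality to $e^{tX}$ with $t > 0$, giving
\[
\Pr[X \ge (1+\eps)\mu] \le e^{-t(1+\eps)\mu} \cdot \exp(\mu(e^{t}-1)),
\]
and then optimize by choosing $t = \log(1+\eps)$ to obtain
\[
\Pr[X \ge (1+\eps)\mu] \le \left(\frac{e^{\eps}}{(1+\eps)^{1+\eps}}\right)^{\mu}.
\]
An analogous calculation with $t < 0$ handles the lower tail $\Pr[X \le (1-\eps)\mu]$, yielding the symmetric bound with $(1-\eps)^{1-\eps}$ in the denominator.

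The remaining (and main) step is the analytic inequality that converts these closed forms into the clean $e^{-\eps^{2}\mu/3}$ bound for $\eps \in (0, 3/2]$. Concretely, I need to show
\[
(1+\eps)\log(1+\eps) - \eps \ge \frac{\eps^{2}}{3} \qquad \text{for } 0 < \eps \le 3/2,
\]
and the analogous inequality $(1-\eps)\log(1-\eps) + \eps \ge \eps^{2}/3$ for $0 < \eps < 1$ (and a separate trivial bound for $\eps \in [1, 3/2]$, where the lower tail event $X \le (1-\eps)\mu$ is empty or vacuous). I would verify the first inequality by defining $f(\eps) = (1+\eps)\log(1+\eps) - \eps - \eps^{2}/3$, checking $f(0) = 0$, and computing $f'(\eps) = \log(1+\eps) - 2\eps/3$; a second derivative check shows $f'$ is nonnegative on $(0, 3/2]$ (with $f'(3/2) = \log(5/2) - 1 \ge 0$). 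A similar one-variable calculus argument handles the lower tail.

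Combining the two one-sided bounds by a union bound produces the factor of $2$, giving the stated inequality. The main obstacle is the calculus step that pins down the constant $1/3$ and the range $\eps \le 3/2$; the MGF computation itself is completely routine. I would simply cite the standard reference (e.g.\ \cite{janson2011random}) for the explicit derivative bookkeeping rather than reproducing it in the appendix.
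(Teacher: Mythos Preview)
The paper does not actually prove this theorem: it is stated in the appendix without proof, with a pointer to \cite{janson2011random}. Your MGF argument is precisely the standard proof one finds in that reference, so in spirit you and the paper agree---you are simply supplying what the paper outsources.

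One small slip in your calculus sketch: the claim that $f'(3/2) = \log(5/2) - 1 \ge 0$ is false, since $\ln(5/2) \approx 0.916 < 1$; hence $f'$ is \emph{not} nonnegative on all of $(0,3/2]$. The desired conclusion $f \ge 0$ on $[0,3/2]$ still holds, but the correct argument is that $f'' (\eps) = \tfrac{1}{1+\eps} - \tfrac{2}{3}$ changes sign at $\eps = 1/2$, so $f'$ (which vanishes at $0$) first increases then decreases, and therefore $f$ is unimodal on $[0,3/2]$ with $f(0)=0$ and $f(3/2) = \tfrac{5}{2}\ln\tfrac{5}{2} - \tfrac{9}{4} > 0$. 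Since you already planned to defer this bookkeeping to the cited reference, the slip is harmless.
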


\begin{theorem}[Chebyshev Inequality] 
\label{thm:chebyshev}
For any random variable $X$ for which the variance $Var[X]$ exists, 

\[Pr[ |X-\E[X]|  \ge t ] \le \frac{Var[X]}{t^2} \]

\end{theorem}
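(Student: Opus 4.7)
The plan is to prove Chebyshev's inequality by the standard two-step reduction: first establish Markov's inequality for nonnegative random variables, then apply it to the nonnegative random variable $Y = (X - \E[X])^2$. This is the textbook route and seems most appropriate given that the paper already uses Chebyshev as a black-box tool (see the analysis of $\mathcal{L}'$ in Proposition~\ref{prop:fastalgo}).

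First I would state and prove Markov's inequality: for any nonnegative random variable $Y$ and any $s > 0$, one has $\Pr[Y \ge s] \le \E[Y]/s$. The proof is a one-line computation using the indicator function: since $Y \ge s \cdot \mathbf{1}\{Y \ge s\}$ pointwise (using $Y \ge 0$), taking expectations yields $\E[Y] \ge s \cdot \Pr[Y \ge s]$, and rearranging gives the claim. This can be done in full rigor in the discrete or continuous setting by writing $\E[Y] = \int Y \, d\mathbf{P} \ge \int_{\{Y\ge s\}} Y \, d\mathbf{P} \ge s \Pr[Y \ge s]$.

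Next I would apply Markov to $Y := (X-\E[X])^2$ with threshold $s := t^2$. Since $Y$ is nonnegative and $\E[Y] = \mathrm{Var}[X]$ by definition, Markov's inequality gives
\[
\Pr\bigl[(X-\E[X])^2 \ge t^2\bigr] \le \frac{\mathrm{Var}[X]}{t^2}.
\]
Finally, because $|X-\E[X]| \ge t$ is the same event as $(X-\E[X])^2 \ge t^2$ (for $t > 0$), the left-hand side equals $\Pr[|X-\E[X]| \ge t]$, yielding the stated inequality. The case $t \le 0$ is trivial since the right-hand side is either infinite or the probability bound is vacuous.

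There is essentially no obstacle here: the argument is three lines once Markov is in hand, and the only mild care needed is the hypothesis that $\mathrm{Var}[X]$ exists (so that the right-hand side is well defined and finite in the nontrivial case). Since the paper's appendix is meant as a self-contained reminder rather than a novel contribution, I would keep the proof to a short paragraph covering Markov and the substitution, rather than belaboring measure-theoretic details.
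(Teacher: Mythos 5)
Your proof is correct and is the standard textbook argument (Markov applied to $(X-\E[X])^2$). Note, however, that the paper does not prove Chebyshev's inequality at all — it simply states it in the appendix as a well-known fact and cites the book of Janson, \L{}uczak, and Ruci\'nski for a reference — so there is no "paper's proof" to compare against; your version would be a fine self-contained substitute if one wanted the appendix to be fully self-contained.
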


We use the term negative binomial distribution in the analysis and since this is defined slightly differently sometimes we give here the definition we use.

\begin{definition}
Let $X_i$ be independent bernoulli random variables with probability of being one is $p$ for any $i \in \NN$. For any $r \in \NN$ let $Y$ be index of the $r$-th $X_i$ which evaluates to $1$. Then $Y$ has a negative binomial distribution $NB(r, p)$.
\end{definition}

We observe that a negative binomial distribution $Y \sim NB(r, p)$ is equivalent to $Y$ being distributed as the sum of $r$ geometric random variables with success probability $p$. Further a simple corollary from the Chebyshev inequality:

\begin{corollary}
\label{cor:nbcon}
For a negative binomial distributed variable $Y \sim NB(r, p)$

\[Pr\left[ Y \ge 2 \frac{r}{p} \right] \le \frac{1}{r} \]

\end{corollary}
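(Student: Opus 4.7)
The plan is to apply Chebyshev's inequality (Theorem~\ref{thm:chebyshev}) directly, using the representation of $Y \sim NB(r,p)$ as a sum of $r$ independent geometric random variables with success probability $p$, as noted just before the corollary.

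First I would compute the mean and variance of a single geometric summand $X_i$ with parameter $p$: $\E[X_i] = 1/p$ and $Var[X_i] = (1-p)/p^2 \le 1/p^2$. By independence and linearity, $\E[Y] = r/p$ and $Var[Y] \le r/p^2$.

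Next I would invoke Chebyshev with threshold $t = r/p$, so that the event $\{Y \ge 2r/p\}$ is contained in $\{|Y - \E[Y]| \ge r/p\}$. Chebyshev then gives
\[
\Pr\!\left[Y \ge \tfrac{2r}{p}\right] \le \Pr\!\left[|Y - \E[Y]| \ge \tfrac{r}{p}\right] \le \frac{Var[Y]}{(r/p)^2} \le \frac{r/p^2}{r^2/p^2} = \frac{1}{r},
\]
which is the stated bound.

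There is no real obstacle here — this is a textbook application of Chebyshev's inequality to a sum of i.i.d.\ geometric variables, and the only mild subtlety is bounding $Var[Y]$ by $r/p^2$ rather than the slightly tighter $r(1-p)/p^2$, which is clearly sufficient for the claimed inequality.
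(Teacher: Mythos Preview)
Your proposal is correct and follows essentially the same approach as the paper: compute $\E[Y]=r/p$ and $Var[Y]=r(1-p)/p^2$, then apply Chebyshev with $t=r/p$. The only cosmetic difference is that the paper keeps the exact variance $r(1-p)/p^2$ to obtain the intermediate bound $(1-p)/r$ before relaxing to $1/r$, whereas you bound the variance by $r/p^2$ up front.
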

\begin{proof}
We calculate $\E[Y] = r/p$ and $Var[Y] = r (1-p)/p^2$ and apply Chebyshev.
\[  Pr\left[ Y \ge 2 \frac{r}{p} \right] \le  Pr\left[ |Y -\E[Y]|  \ge \frac{r}{p} \right]  \stackrel{Chebyshev}{\le} \frac{1-p}{r}  \le \frac{1}{r}  \]

\end{proof}

\end{document}